
\documentclass[3p]{elsarticle}

\usepackage{graphicx}      

\usepackage{geometry}
\makeatletter
\def\ps@pprintTitle{%
 \let\@oddhead\@empty
 \let\@evenhead\@empty
 \def\@oddfoot{\centerline{\thepage}}%
 \let\@evenfoot\@oddfoot}
\makeatother

\usepackage{amsmath}
\usepackage{amssymb}
\usepackage{amsthm}
\usepackage[utf8]{inputenc}
\usepackage[dvipsnames]{xcolor}
\usepackage{tikz}
\usepackage{siunitx}

\colorlet{lightGray}{black!10!white}
\colorlet{lightlightGray}{black!5!white}

\usetikzlibrary{backgrounds}

\usetikzlibrary{calc,patterns,angles,quotes,decorations.pathmorphing}
\usetikzlibrary{shadows.blur}
\usetikzlibrary{shapes.symbols,positioning}

\newtheorem{thm}{\bf Theorem}
\newtheorem{lem}{\bf Lemma}
\newtheorem{prop}{\bf Proposition}
\newtheorem{cor}{\bf Corollary}
\theoremstyle{definition}
\newtheorem{rem}{\bf Remark}
\newtheorem{defn}{\bf Definition}


\newcommand{\Ri}[1]{\mathbb{R}^{#1}}
\newcommand{\cont}[1]{\mathcal{C}^{#1}}
\newcommand{\I}[1]{{{ I}}_{#1}}
\newcommand{\0}[1]{{{\bf 0}}_{#1}}

\DeclareMathOperator*{\argmin}{arg\,min}

\newcommand{\inv}{^{-1}}
\newcommand{\pinv}{^{\dagger}}
\newcommand{\transp}{^{\mathsf{T}}}
\newcommand{\rank}[1]{{\text{rank}~{ #1}}}
\newcommand{\im}[1]{{\text{im}~{ #1}}}
\newcommand{\vspan}[1]{{\text{span}~{ #1}}}

\newcommand{\mg}{s}
\newcommand{\xs}{x_s}
\newcommand{\tvc}{z_{\perp}}
\newcommand{\Dtvc}{\dot{z}_ {\perp}}
\newcommand{\state}{{{x}}}
\newcommand{\Dstate}{{\dot{{x}}}}
\newcommand{\ac}{{{u}}}

\newcommand{\nvel}{\rho}
\newcommand{\sspace}{\mathcal{S}}
\newcommand{\Pspace}{\mathcal{X}}
\newcommand{\prj}{p}
\newcommand{\DP}{D\prj}
\newcommand{\DDP}{D^2\prj}
\newcommand{\DPs}{\Gamma}
\newcommand{\DDPs}{D^2\prj}
\newcommand{\Oms}{\Omega}
\newcommand{\Pis}{\Pi}
\newcommand{\Pisin}{\Pi^{\dagger}}
\newcommand{\mtvc}{y_\perp}
\newcommand{\Dmtvc}{\dot{y}_\perp}

\newcommand{\pperp}{\varphi_{\perp}}
\newcommand{\Pperp}{\Phi_{\perp}}

\newcommand{\Pperpinv}{\Phi_{\perp}\pinv}

\def\withproofs{1}
\def\arxivpaper{1}
\begin{document}
 \title{\bf On Excessive Transverse Coordinates for Orbital Stabilization of Periodic Motions\tnoteref{t1}} 
\tnotetext[t1]{This work has been  supported by the Research Council of Norway,  grant number 262363.}

\author{Christian Fredrik Sætre   \&
Anton Shiriaev} 

\address{Department of Engineering Cybernetics, NTNU, Trondheim, Norway. \newline
 \{christian.f.satre,anton.shiriaev\}@ ntnu.no}

\begin{abstract}                
This paper explores transverse coordinates for the purpose of orbitally stabilizing periodic motions of nonlinear, control-affine dynamical systems. It is shown that the dynamics of any (minimal or excessive) set of transverse coordinates, which are defined in terms of a particular parameterization of the motion and a strictly state-dependent projection operator recovering the parameterizing variable, admits a (transverse) linearization along the target motion, with explicit expressions stated. Special focus is then placed on a generic excessive set of orthogonal coordinates, revealing a certain limitation of the ``excessive" transverse linearization for the purpose of control design. To overcome this limitation, a linear comparison system is introduced and conditions are stated for when the asymptotic stability of its origin corresponds to the asymptotic stability of the origin of linearized transverse dynamics. This allows for the construction of feedback controllers utilizing this comparison system which, when implemented on the dynamical system, renders the desired motion asymptotically stable in the orbital sense.
\end{abstract}

\begin{keyword} 
Orbital stabilization, transverse coordinates, transverse linearization.
\end{keyword}

\maketitle
\section{INTRODUCTION}
We consider the task of designing orbitally stabilizing feedback for periodic solutions of nonlinear, control-affine dynamical systems, defined by
\begin{equation}\label{eq:DynSYs}
    \Dstate=f(\state)+g(\state)\ac, \quad \state\in\Ri{n},  \quad \ac\in\Ri{m}.
\end{equation}
Here the notion of \emph{asymptotic orbital (Poincaré) stability} simply means the asymptotic  convergence to the   periodic \emph{orbit} (i.e. the set of all the states along the solution) and not to a specific point-in-time along a trajectory (see e.g. \cite{leonov2008strange}). 
In this regard,  we  recall  the following.
\begin{thm}[Andronov--Vitt]
A nontrivial, $T$-periodic solution $\state_*(t)=\state_*(t+T)$ of a smooth  dynamical  system $ \Dstate=F(\state)$ on $\Ri{n}$ is asymptotically orbitally stable if the  first approximation, $\delta \Dstate=\frac{\partial F}{\partial \state}(\state_*(t))\delta\state$,
has one simple zero characteristic  exponent and the remaining  $(n-1)$ characteristic exponents have  strictly negative real parts.
\end{thm}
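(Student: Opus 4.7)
The plan is to reduce the question of orbital stability of the orbit $\{\state_*(t):t\in[0,T)\}$ to the standard question of asymptotic stability of a fixed point of a Poincaré return map defined on a cross-section transverse to the orbit. Fixing the point $\state_*(0)$, the nontriviality assumption ensures $F(\state_*(0))\neq 0$, so one may introduce a smooth hypersurface $\Sigma\subset\Ri{n}$ of dimension $n-1$ passing through $\state_*(0)$ transversally to $F(\state_*(0))$. By the implicit function theorem applied to the flow $\phi_t$ of $F$, for initial conditions $\state_0\in\Sigma$ sufficiently close to $\state_*(0)$ the trajectory returns to $\Sigma$ at some time $\tau(\state_0)$ near $T$, defining a smooth Poincaré map $P:\Sigma\to\Sigma$ with fixed point $\state_*(0)$. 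A standard argument then gives that the orbit is asymptotically orbitally stable if and only if the spectrum of $DP(\state_*(0))$ lies strictly inside the open unit disk.

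Next, I invoke Floquet theory for the variational equation $\delta\Dstate = A(t)\delta \state$ with $A(t)=\frac{\partial F}{\partial \state}(\state_*(t))$. Differentiating $\Dstate_*(t)=F(\state_*(t))$ in $t$ shows that $\Dstate_*(t)$ is itself a nontrivial $T$-periodic solution of the variational equation, so $\mu=1$ is always a characteristic multiplier (equivalently, $0$ is always a characteristic exponent); the hypothesis ensures it is simple. By the correspondence $\mu_k=\exp(\lambda_k T)$ between multipliers and exponents, the remaining $n-1$ multipliers lie strictly inside the open unit disk.

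The crux is to show that the $n-1$ eigenvalues of $DP(\state_*(0))$ coincide exactly with these nontrivial Floquet multipliers. Letting $\Phi(t)$ denote the state-transition matrix of the variational equation normalized by $\Phi(0)=\I{n}$, the monodromy matrix $M=\Phi(T)$ has $\Dstate_*(0)$ as an eigenvector for $\mu=1$. Differentiating the identity $P(\state_0)=\phi_{\tau(\state_0)}(\state_0)$ via the chain rule yields $DP(\state_*(0))\delta \state = M\delta \state + \Dstate_*(0)\, D\tau(\state_*(0))\delta \state$; projecting onto $T_{\state_*(0)}\Sigma$ (using that the image lies in $\Sigma$ to solve for $D\tau$) reveals that $DP(\state_*(0))$ acts on $T_{\state_*(0)}\Sigma$ as the restriction of $M$ modulo $\text{span}\{\Dstate_*(0)\}$. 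Because that span is precisely the eigenspace of the simple eigenvalue $\mu=1$, the remaining $n-1$ multipliers pass unchanged to the spectrum of $DP(\state_*(0))$, which therefore lies inside the open unit disk, yielding asymptotic stability of the fixed point and hence of the orbit.

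The main technical obstacle is this spectral reduction from $M$ to $DP(\state_*(0))$. The simplicity hypothesis is essential here: without it, a nontrivial Jordan block at $\mu=1$ could contribute a neutrally stable direction transverse to $\Dstate_*(0)$ and break the clean splitting between the tangential and transverse components of the linearization. The rest of the argument is essentially bookkeeping: verifying smoothness of $\tau$ and $P$ via the implicit function theorem, ensuring well-definedness of the return map in a neighborhood of $\state_*(0)$, and translating contraction of the iterates of $P$ into orbital convergence of nearby trajectories to $\{\state_*(t)\}$.
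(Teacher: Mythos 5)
The paper states the Andronov--Vitt theorem purely as classical background and never proves it (it is quoted from the literature, e.g. Leonov's monographs), so there is no in-paper proof to compare against; your argument therefore stands on its own, and it is the standard, correct route: reduce orbital stability of the orbit to asymptotic stability of the fixed point $\state_*(0)$ of a Poincar\'e return map $P$ on a transversal $\Sigma$, then identify the spectrum of $DP(\state_*(0))$ with the $n-1$ nontrivial Floquet multipliers of the monodromy matrix $M=\Phi(T)$. Your key computation $DP(\state_*(0))\delta\state = M\delta\state + \Dstate_*(0)\,D\tau(\state_*(0))\,\delta\state$ is correct, and the rank-one correction in the direction $\Dstate_*(0)$ is exactly the oblique projection onto $T_{\state_*(0)}\Sigma$ along $\mathrm{span}\{\Dstate_*(0)\}$; since that span is $M$-invariant (eigenvector for $\mu=1$) and, by the simplicity hypothesis, carries the only unit-circle multiplier, the induced quotient map---which is what $DP(\state_*(0))$ realizes on $T_{\state_*(0)}\Sigma$---has precisely the remaining $n-1$ multipliers, all strictly inside the unit disk. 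Two small cautions. First, your claim that the orbit is asymptotically orbitally stable \emph{if and only if} the spectrum of $DP(\state_*(0))$ lies in the open unit disk overstates the converse: spectral radius less than one is sufficient but not necessary for asymptotic stability of the fixed point (the correct equivalence is between orbital asymptotic stability and asymptotic stability of the fixed point of $P$); this is harmless here since only sufficiency is used. Second, the concluding ``bookkeeping'' step---that contraction of the iterates of $P$ yields convergence of all nearby trajectories to the orbit---does need the flow-box/continuous-dependence argument with return times uniformly close to $T$ to be rigorous; carried out properly it in fact delivers the stronger classical conclusion of asymptotic stability with asymptotic phase, which is consistent with the phase-shift discussion surrounding Lemma~\ref{lem:FAphaseshift} in the paper.
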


It thus follows    that the stability of a periodic orbit is  equivalent to the stability of an $(n-1)$-dimensional subsystem of the first approximation along the nominal solution.  At the same time, the Andronov--Vitt theorem also highlights a limitation of the first approximation for the purpose of  feedback design for \eqref{eq:DynSYs}
due to its non-vanishing (zero characteristic (Floquet) exponent) solution. It would  therefore clearly be beneficial to instead just    target the $(n-1)$-dimensional subsystem  directly, which it turns out  is  equivalent to only considering the dynamics \emph{transverse} to the orbit. Indeed, it is known that a periodic solution is asymptotically  stable in the orbital sense if (and only if) the dynamics {transverse} to the flow along the nominal orbit are asymptotically  stable (\cite{hauser1994converse}). 

 The design of orbitally stabilizing feedback controllers  can therefore  be boiled down to two main steps: 1) Find a (minimal) set of $(n-1)$  independent \emph{transverse coordinates} which vanish on the orbit and are non-zero away from it; and then  2) Design a  controller (by some means) which stabilizes the origin of these  coordinates. 
Here the latter step is commonly achieved by linearization of the dynamics of these coordinates along the solution, a so-called \emph{transverse linearization},  allowing for feedback design utilizing well-known linear control techniques.

While there exists constructive procedures for finding such a minimal set of coordinates for certain classes of systems  \citep{shiriaev2010transverse,banaszuk1995feedback}), finding $(n-1)$ independent coordinates can be challenging in the general case. 
The main contribution of this paper is therefore to show that one instead can utilize an  \emph{excessive} set of  { transverse coordinates}. In fact, we show that any such set (minimal or excessive) will do (see Proposition~\ref{prop:minTVCandExcTVC}). In this regard, we also provide explicit  expressions for the linearized transverse dynamics of \emph{any} (minimal or excessive) set of transverse coordinates (see  Theorem~\ref{theorem:TVLGencoords} in Sec.~\ref{sec:EquivAndArbTVL}). 

In order to provide some further insight into- and highlight a limitation of the transverse linearization for an excessive set of coordinates  (see Sec.~\ref{sec:LimTVLD}), we subsequently focus on a generic  set of easy-to-compute  orthogonal coordinates  introduced in Sec.~\ref{sec:OrtCoords}. In this regard, this paper's  second major contribution is the introduction of a linear comparison system for these coordinates, which can be used for orbitally stabilizing feedback design for   systems of the form \eqref{eq:DynSYs} (see Proposition~\ref{theorem:mainResult} in Sec.~\ref{sec:AuxSys}).
In order to illustrate the proposed scheme, we consider  a constructive  example   in Sec.~\ref{sec:ExampleBanHau}, before, lastly, we  state some concluding remarks.

\section{Preliminaries and key idea}\label{sec:Preliminaries}
Consider the control-affine system \eqref{eq:DynSYs}  with   $f:\Ri{n}\to\Ri{n}$ continuously differentiable   and $g(x)=[g_1(x),\dots,g_m(x)]$ with (locally) Lipschitz continuous  vector fields $g_i:\Ri{n}\to\Ri{n}$. Let $x_*(t)=x_*(t+T)$  denote a bounded,  $T$-periodic solution of the undriven system ($\ac\equiv 0$) satisfying $\|\Dot{x}_*(t)\|>0$ for all $t\ge 0$, and let 
\begin{equation*}
    \eta_*:=\{x\in\Ri{n}: \ x=x_*(t),\ t\in[0,T)\} 
\end{equation*}
denote the corresponding closed orbit.
Suppose this orbit admits a regular $\mathcal{C}^2$-parameterization, defined by
\begin{equation}\label{eq:NomOrbit}
    \xs:\sspace \to \eta_*,\quad s\mapsto \xs(s), \quad \xs(s+s_T)=\xs(s),
\end{equation}
 such that  the parameterizing   variable,   $s\in\sspace:=[s_0,s_0+s_T)$, is strictly monotonically increasing along $\eta_*$ and $\|\frac{d}{ds}\xs(s)\|=\|\xs'(s)\|>0$ for all $s\in\sspace$. 
Further suppose that a \emph{projection operator}, $\state\mapsto \prj(\state)\in\sspace$, in accordance with the  following definition is known 
for this curve.
\begin{defn}\label{def:ProjOp}
A mapping $ \prj:\Ri{n}\to\mathcal{S}$
is said to be a 
\emph{projection operator}  onto the orbit $\eta_*$ if  it is twice continuously differentiable within some  tubular neighbourhood $\Pspace\subset \Ri{n}$ of $\eta_*$ and  it is a left inverse of  the curve  \eqref{eq:NomOrbit}, that is $s=\prj( \xs(s))$ for all $ s\in\sspace$. \qed 
\end{defn}

The idea behind  such a projection operator is simply that, within some tubular neighbourhood, it allows one to project the current states  down upon the nominal orbit and consequently  define some measure of the distance to it.  For instance, consider  the set $\Lambda(\hat{s}):=\{x\in \mathcal{X}:\prj(x)=\hat{s}\}$, that is, the set of
states in a neighbourhood of $\eta_*$  mapped to some particular $\hat{s}\in\sspace$. As illustrated in Figure~\ref{fig:MPS}, it traces out a hypersurface, whose geometry is clearly dependent on the choice of $\prj(\cdot)$. This surface (manifold) of dimension $(n-1)$ is analogous to a  \emph{moving Poincaré section}  \citep{leonov2006generalization} which moves along with the trajectory and is locally  transverse to its flow.  It follows that if one can define a set of coordinates evolving upon- and spanning these sections, and then enforce,  by some control action,  strict contraction of these coordinates towards their origin (i.e. the  orbit), then  the desired trajectory must be asymptotically stable in the orbital sense.

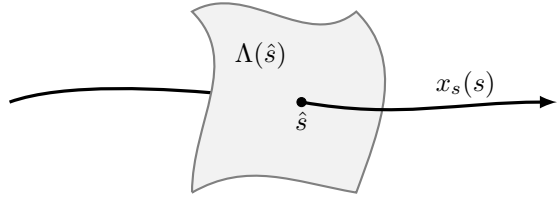
\begin{figure}
    \centering
    \begin{tikzpicture}[tight background,scale=1.2]

\if\arxivpaper0
 \coordinate (origo) at (0.2,0);
 \draw[very thick,-latex] (-3,0) to [out=20,in=-10] (origo);
\draw[gray,thick,fill=lightlightGray,shade,right color=lightGray,left color=white,middle color=lightlightGray,shading angle=135,blur shadow={shadow blur steps=100,left},shadow xshift=-0.2em,shadow yshift=-0.1em] (-1,-1) to [out=90,in=-45] (-1,1) to [out=30,in=220] (0.8,1) to [out=-45,in=70] (0.8,-1) to [out=170,in=30]  cycle ;


\if\arxivpaper0
\draw[gray,thick,fill=lightlightGray] (-1,-1) to [out=90,in=-45] (-1,1) to [out=30,in=220] (0.8,1) to [out=-45,in=70] (0.8,-1) to [out=170,in=30]  cycle ;
top color=blue!40,
      bottom color=blue!5,
      rounded corners=6pt,
      blur shadow={shadow blur steps=5}
      
\fi

\draw[very thick,-latex] (origo)  [out=-10,in=180] to (3,0);
\node at (2,0.2)  {$\state_s(s)$};
\draw[fill=black] (origo) circle(0.05) node[above left = 9pt and 2pt] {$\Lambda(\hat{s})$} node[below] {$\hat{s}$};
\fi
\if\arxivpaper1
 \coordinate (origo) at (0.2,0);
 \draw[very thick,-latex] (-3,0) to [out=20,in=-10] (origo);

\draw[gray,thick,fill=lightlightGray] (-1,-1) to [out=90,in=-45] (-1,1) to [out=30,in=220] (0.8,1) to [out=-45,in=70] (0.8,-1) to [out=170,in=30]  cycle ;

\draw[very thick,-latex] (origo)  [out=-10,in=180] to (3,0);
\node at (2,0.2)  {$\state_s(s)$};
\draw[fill=black] (origo) circle(0.05) node[above left = 9pt and 2pt] {$\Lambda(\hat{s})$} node[below] {$\hat{s}$};

\fi
\end{tikzpicture}
    \caption{Illustration of the transverse surface formed by $\Lambda(\cdot)$.    }
    \label{fig:MPS}
\end{figure}

Note that this concept  is in many ways  both  similar to- and inspired by  \emph{Zhukovski stability} (see, e.g., \cite{leonov1995local,leonov2008strange}). Roughly speaking, this  notion of stability, which implies orbital stability \citep{leonov2008strange}, utilizes  parameterizations to ``align" perturbed trajectories in space   while not considering their divergence in time. Our approach, however, differs  by the fact that, whereas Zhukovski considered  reparameterizations of perturbed trajectories in terms of a ``rescaling of time", we consider a completely state-dependent projection operator as defined in Def.~\ref{def:ProjOp}. This has, for the purpose of control design,  the benefit that it allows one to define the aforementioned  state-dependent distance measure,  further allowing for the design of completely state-dependent orbitally stabilizing feedback controllers. Such a feedback, if found, then results in an   autonomous closed-loop system which admits the desired solution as an attractive  limit cycle. 

\paragraph*{\bf Notation:} 
$\|\cdot\|$  denotes the Euclidean norm.
 For a twice-continuously differentiable ($\mathcal{C}^2$-) function  $\state\mapsto h(\state)$, we denote by    $Dh(\cdot)=[\frac{\partial h}{\partial x_1}(\cdot),\dots,\frac{\partial h}{\partial x_n}(\cdot)]$ its Jacobian matrix, while if $h:\Ri{n}\to\Ri{}$, we denote by $D^2h(\cdot)$ its symmetric, $n\times n$ Hessian matrix. If $h_s(s):=h(\xs(s))$, then $h_s'(s)$ denotes the derivative $\frac{d }{ds}h_s(s)$.

\section{Equivalence between  coordinates and the transverse linearization}\label{sec:EquivAndArbTVL}
In  regards to the aforementioned  distance measure, consider   
\begin{equation}\label{eq:TVC}
    \tvc:=\state-\xs(\prj(\state)).
\end{equation}
 In some sense, they are the simplest measure of such a distance, but their definition is also clearly dependent on the choice of the projection operator $\prj(\cdot)$. In particular, they must  evolve upon some hypersurface such as  those formed by the set $\Lambda(\cdot)$. But $\tvc\in\Ri{n}$, and so they are an  \emph{excessive} set of coordinates upon this surface. In fact, they are not a valid change of coordinates  either, as the map $\state \mapsto \tvc$ is  evidently not a  diffeomorphism.
To see this   more clearly, consider the Jacobian matrix  $D\tvc(\state)$. 
Taking the time-derivative of  \eqref{eq:TVC}, we  obtain
 \begin{equation}\label{eq:TVD1}
     \Dtvc=D\tvc(\state)\Dstate=D\tvc(\state)f(\state)+D\tvc(\state)g(\state)u.
 \end{equation}
It follows that, sufficiently close the orbit, a  variation in the states, $\delta \state$, relates to a  variation in the coordinates \eqref{eq:TVC}  through $\Omega(s):=D\tvc(\xs(s))$:
\begin{equation}\label{eq:deltaGTVCdeltaState}
    \delta\tvc=\Oms(s)\delta \state.
\end{equation}
Similarly, by defining $\DPs(s):=D\prj(\xs(s))$, we  find that 
\begin{equation*}
    \delta s=\Gamma(s)\delta \state.
\end{equation*}
Thus for \eqref{eq:TVC} to be a valid (local) change of coordinates, the matrix function $\Oms(s)$ must necessarily be everywhere invertible. However,  as is clear by  the following statement, which is just a straightforward consequence of the relation
\begin{equation}\label{eqDpDxs}
    \DPs(s)\xs'(s)\equiv 1 \quad \forall  s\in \sspace,
\end{equation}
obtained from $s=\prj(\xs(s))$ (see Def.~\ref{def:ProjOp}),
 this can  never be the case for  solutions of the form \eqref{eq:NomOrbit}.
\begin{lem}\label{lemma:Omegas}
The matrix function
    \begin{equation}
    \Oms(s):=D\tvc(\xs(s))=\I{n}-\xs'(s)\DPs(s)
    \end{equation}
     is a projection matrix (i.e. $\Oms^2(s)=\Oms(s)$), its rank is always $(n-1)$, while  $\DPs(s):=D\prj(\xs(s))$ and $\xs'(s):=\frac{d}{ds}\xs(s)$ are its left- and right annihilators, respectively. \if\withproofs0{\qed\footnote{Proofs of all the statements are given in the extended version of this paper which is available on the arXiv: arXiv:1911.06232.}}
\fi 
\end{lem}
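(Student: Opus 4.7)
The plan is to establish the formula for $\Omega(s)$ first by direct differentiation, then verify each claim as a short algebraic consequence of the key identity \eqref{eqDpDxs}, namely $\DPs(s)\xs'(s)\equiv 1$.

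First, I would compute $D\tvc(\state)$ from the definition $\tvc = \state - \xs(\prj(\state))$ using the chain rule, obtaining $D\tvc(\state) = \I{n} - \xs'(\prj(\state))\, D\prj(\state)$. Evaluating at $\state = \xs(s)$, and using $\prj(\xs(s)) = s$ from Def.~\ref{def:ProjOp}, this yields precisely $\Oms(s) = \I{n} - \xs'(s)\DPs(s)$. This confirms the stated expression.

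Next, I would verify the annihilator properties by direct substitution. For the right annihilator: $\Oms(s)\xs'(s) = \xs'(s) - \xs'(s)\DPs(s)\xs'(s) = \xs'(s) - \xs'(s)\cdot 1 = 0$, where the middle equality uses \eqref{eqDpDxs}. For the left annihilator: $\DPs(s)\Oms(s) = \DPs(s) - \DPs(s)\xs'(s)\DPs(s) = \DPs(s) - \DPs(s) = 0$, again by \eqref{eqDpDxs}. Idempotence then follows immediately from the right annihilator property:
\begin{equation*}
\Oms^2(s) = \bigl(\I{n}-\xs'(s)\DPs(s)\bigr)\Oms(s) = \Oms(s) - \xs'(s)\bigl(\DPs(s)\Oms(s)\bigr) = \Oms(s).
\end{equation*}

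Finally, for the rank statement I would use the standard fact that any idempotent matrix has rank equal to its trace. Since $\trace(\xs'(s)\DPs(s)) = \trace(\DPs(s)\xs'(s)) = 1$ by the cyclic property and \eqref{eqDpDxs}, we obtain $\rank{\Oms(s)} = \trace(\Oms(s)) = n - 1$. (Alternatively, one could note that $\xs'(s)\neq 0$ lies in the kernel by the right annihilator property, while the image of $\Oms(s)$ contains any vector $v$ satisfying $\DPs(s)v=0$, forcing the kernel to be exactly one-dimensional.) There is no real obstacle here; the whole argument is a three-line consequence of the identity \eqref{eqDpDxs}, and the only thing to be careful about is correctly invoking the chain rule and the left-inverse property from Def.~\ref{def:ProjOp} when deriving the formula for $\Oms(s)$.
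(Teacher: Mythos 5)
Your proof is correct and follows essentially the same route as the paper: both derive everything from the identity $\DPs(s)\xs'(s)\equiv 1$ obtained by differentiating $s=\prj(\xs(s))$, then verify the annihilator and idempotence properties by direct substitution. The only difference is the rank step, where you invoke $\rank{\Oms(s)}=\trace\Oms(s)=n-1$ for an idempotent matrix, whereas the paper argues via the rank-one perturbation of $\I{n}$ together with the rank--nullity theorem; both are valid, and your trace argument is arguably the tidier of the two.
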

\if\withproofs1
Proof of this statement is given in \ref{app:proofPmegas}.
\fi

From Lemma~\ref{lemma:Omegas} it is  clear  that we  have $\Oms(s)\delta \tvc=\Oms^2(s)\delta \state=\delta \tvc$, and therefore the relation
\begin{equation*}
    \DPs(s)\delta \tvc=\DPs(s)\Oms(s)\delta \tvc\equiv 0    
\end{equation*}
must always hold. We can thus infer that, sufficiently close to the nominal orbit, the coordinates \eqref{eq:TVC} are orthogonal to the gradient of the projection operator $\prj(\cdot)$ and hence locally transverse to the nominal flow of the orbit.
Indeed,  it is important to note that the relation \eqref{eqDpDxs} does not imply that $\DPs\transp(s)$ is necessarily in the span of ${\xs'}(s)$. Rather, if $\theta(s)\in(-\frac{\pi}{2},\frac{\pi}{2})$ denotes the angle between $\DPs\transp(s)$ and ${\xs'}(s)$ in their common plane, then, as a direct consequence of the  inner product $\DPs\xs'=\|\DPs\|\|\xs'\|\cos(\theta)$,  there exists some continuously differentiable  unit vector function  $q_\perp\transp(s):\sspace\to \Ri{n}$  within  $\ker{\xs'}\transp(s)$, such that  
\begin{equation}\label{eq:GenDpsDef}
    \DPs(s)=\frac{{\xs'}\transp(s)}{\|\xs'(s)\|^2}+\tan(\theta(s))\frac{q_\perp(s)}{\|\xs'(s)\|}.
\end{equation}
Consequently, the coordinates \eqref{eq:TVC} are  in general only locally transverse to the  flow of the  orbit and not necessarily orthogonal to it. Moreover, they must be  an excessive set of transverse coordinates as $\text{rank }\Oms(s)=n-1 $. Nevertheless,  we will show shortly  that the  asymptotic stability of their origin  in fact implies the asymptotic stability of any other   valid set of transverse coordinates, and, therefore, also the asymptotic stability of the nominal orbit. 

\subsection{Equivalence between transverse coordinates}
Let us start by giving a formal definition of what we  mean when we refer to a ``valid set of transverse coordinates". In this regard, consider a $\cont{2}$-function $\mtvc:\sspace\times \Ri{n}\to \Ri{N}$,   together with a projection operator $\prj(\cdot)$. Note that we will  distinguish between the partial- and total derivative of $\mtvc$ with respect to $\state$ as follows:
    \begin{equation*}
    D\mtvc(s,x)=\frac{\partial \mtvc}{\partial \state}(s,\state)+\frac{\partial \mtvc}{\partial s}(s,\state)\DP(\state).
\end{equation*}

\begin{defn}\label{def:TVC} 
    A $\cont{2}$-function $\mtvc:\sspace\times \mathcal{X}\to \Ri{N}$,  $N\ge n-1$, is said to contain  a \emph{valid set of transverse coordinates} for the curve \eqref{eq:NomOrbit} if it vanishes on it, i.e. $\mtvc(s,\xs(s))\equiv 0$, and  for all $s\in\sspace$ it satisfies $\rank{\frac{\partial \mtvc}{\partial x}(s,\xs(s))}=\min(N,n)$ and $\rank{D\mtvc(s,\xs(s))}=n-1$. \qed
\end{defn}
For the case $N=n-1$, we will refer to $\mtvc$ as a  \emph{minimal} set of transverse coordinates by the fact that the mapping $(\mtvc,s)\mapsto x$ is then a diffeomorphism in some non-zero neighbourhood of $\eta_*$. One the other hand, whenever  $N\ge n$, we will refer to them as \emph{excessive} coordinates.

Note that the reason we consider the specific form $\mtvc=\mtvc(s,\state)$ rather than just $\mtvc=\mtvc(\state)$ is to highlight the possible dependence of these coordinates upon the choice of projection operator. That is to say,  given two different projection operators $\prj^1 (\cdot)$ and $\prj^2 (\cdot)$ for the curve, then, by a slight abuse of notation, the coordinates $\mtvc^1:=\mtvc(\prj^1(\state),\state)$ and $\mtvc^2:=\mtvc(\prj^2(\state),\state)$ are  not equivalent as long as $\frac{\partial }{\partial s}\mtvc\neq 0$, but  are nevertheless both  valid transverse coordinates for the curve.
  
With this in mind, suppose  $\mtvc$ is a  valid set of coordinates by Def.~\ref{def:TVC}. 
Differentiating, we find that their dynamics are described by
\begin{equation}\label{eq:minTVCdynamics}
    \Dmtvc=D\mtvc(s,\state)\left[f(\state)+g(\state)u\right].
\end{equation}
Our task will now be to linearize the dynamics of $\mtvc$ along the orbit $\eta_*$ in order to obtain a linear (periodic) system, the so-called linearized transverse dynamics, which we then can use to  design orbitally stabilizing feedback.  
Towards this end, we observe that since  $\mtvc(s,\xs(s))\equiv 0$,  we must have $\Dmtvc(s,\xs(s))\equiv 0$. Therefore,  by  defining 
\begin{equation*}
\Pis(s):=\frac{\partial \mtvc}{\partial \state}(s,\xs(s)), \end{equation*}
 it is implied that the following relation must hold:
\begin{equation}\label{eq:dyds}
    \frac{\partial \mtvc}{\partial s}(s,\xs(s))=-\Pis(s)\xs'(s).
\end{equation}
 Thus, sufficiently close to the orbit, it is true that
\begin{align*}
    \delta \mtvc&=D\mtvc(s,\xs(s))\delta\state= \Pis(s)\Oms(s)\delta \state,
\end{align*}
and hence, by \eqref{eq:deltaGTVCdeltaState}, we obtain
\begin{equation}\label{eq:dmtvcPiTvc}
    \delta \mtvc=\Pis(s)\delta \tvc.
\end{equation}
This naturally leads us to the following  unsurprising  statement, which simply shows  that there is a certain stability equality between  all sets of transverse coordinates.
\begin{prop}\label{prop:minTVCandExcTVC}
    The origin of a valid set of transverse coordinates $\mtvc$ is asymptotically stable if, and only if, the origin of the coordinates $\tvc$ is asymptotically stable.  \if\withproofs0{\qed}\fi
\end{prop}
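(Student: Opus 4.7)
The plan is to establish a local, uniform-in-$s$ equivalence between the norms $\|\tvc\|$ and $\|\mtvc\|$ in a small tubular neighbourhood of $\eta_*$; from such an equivalence the asymptotic stability of the origin of one coordinate set transfers directly to the other. For the upper bound, I would Taylor-expand $\mtvc(s,\state)$ in its second argument about $\state=\xs(s)$, with $s=\prj(\state)$ held fixed. Using $\mtvc(s,\xs(s))\equiv 0$ together with the definition $\Pis(s)=\frac{\partial \mtvc}{\partial \state}(s,\xs(s))$, this yields
\begin{equation*}
    \mtvc(s,\state) = \Pis(s)\tvc + O(\|\tvc\|^2),
\end{equation*}
which recovers \eqref{eq:dmtvcPiTvc} to leading order. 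Continuity of $\Pis(\cdot)$ on the compact interval $\sspace$ then immediately provides some $c_2>0$ with $\|\mtvc\|\le c_2\|\tvc\|$ in a sufficiently small neighbourhood of $\eta_*$.

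The lower bound is the main technical step. First, expanding $\prj(\xs(s)+\tvc)=s$ (which follows from $s=\prj(\state)$) one obtains $\DPs(s)\tvc = O(\|\tvc\|^2)$, so $\tvc$ lies, to leading order, in $\ker \DPs(s)=\im \Oms(s)$ by Lemma~\ref{lemma:Omegas}. Next, combining \eqref{eq:dyds} with the chain rule one verifies that $D\mtvc(s,\xs(s))=\Pis(s)\Oms(s)$; since this product has rank $n-1$ by Definition~\ref{def:TVC} and since $\im \Oms(s)$ is itself $(n-1)$-dimensional, the restriction $\Pis(s)|_{\im \Oms(s)}$ must be injective. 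A compactness argument then produces a uniform $c_1>0$ with $\|\Pis(s)v\|\ge c_1\|v\|$ for every $v\in \im \Oms(s)$ and every $s\in\sspace$. Combined with the identity $\tvc = \Oms(s)\tvc + O(\|\tvc\|^2)$, this yields $\|\mtvc\|\ge \tfrac{c_1}{2}\|\tvc\|$ whenever $\tvc$ is sufficiently small.

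The main obstacle is the \emph{excessive} case, in which $\Pis(s)$ is an $N\times n$ matrix with $N\ge n$ and so cannot define a genuine change of variables on $\Ri{n}$. What rescues the argument is that one never needs $\Pis(s)$ to be a bijection on all of $\Ri{n}$: injectivity on the transverse subspace $\im \Oms(s)$ is precisely what the rank hypothesis in Definition~\ref{def:TVC} guarantees, and that is exactly what the norm equivalence -- and hence the stability equivalence in both directions -- requires.
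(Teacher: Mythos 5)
Your proof is correct and follows essentially the same route as the paper: the heart of both arguments is the first-order relation $\mtvc \approx \Pis(s)\tvc$ together with injectivity of $\Pis(s)$ on the transverse subspace $\ker \DPs(s)=\im\,\Oms(s)$, which the paper isolates as its Lemma on $\Pis(s)\varphi_\perp(s)$ and which you obtain from $\rank{D\mtvc(s,\xs(s))}=\rank{\Pis(s)\Oms(s)}=n-1$. Your version merely packages this as an explicit two-sided norm equivalence with Taylor remainders (and handles the minimal and excessive cases uniformly rather than by case split), which is a slightly more careful rendering of the same idea, not a different proof.
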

\if\withproofs1
The proof of Proposition~\ref{prop:minTVCandExcTVC} can be found in \ref{app:proofminTVCandExcTVC}.
\fi

\subsection{Transverse linearization}
Now, let $\Psi(s):=D\mtvc(s,\xs(s))$ and consider the differentiable matrix function $\Pisin:\sspace\to\Ri{n\times N}$, defined by
\begin{equation}\label{eq:inverseofMinimalJacobian}
   \hspace{-1mm} \ \Pisin(s):=
    \begin{cases}
        \Oms(s)\Psi\transp(s){[\Psi(s)\Psi\transp(s)]}\inv & \hspace{-3mm}\text{if} \ N=n-1, \\
        \Pis\inv(s) & \hspace{-3mm}\text{if} \ N=n ,
        \\
       {[\Pis\transp(s)\Pis(s)]}\inv \Pis\transp(s) & \hspace{-3mm}\text{if} \ N>n .
    \end{cases}
\end{equation} 
This allows us to state the main result of this  section.
\begin{thm}\label{theorem:TVLGencoords} 
    Let $\mtvc\in\Ri{N}$ be a valid  set of transverse coordinates together with a   projection operator $\prj(\cdot)$. Then the linearization of their dynamics  \eqref{eq:minTVCdynamics} evaluated along the solution \eqref{eq:NomOrbit} is described by the constrained (differential-algebraic)
    linear-periodic system
    \begin{align}\label{eq:GentvcTVL} \nonumber
        \frac{d}{dt}\delta \mtvc &=\big[\Pis(s)A_\perp(s)+\Xi(s) \big]\Pisin(s)\delta \mtvc+\Pis(s)B_\perp(s)u \\ 
        0&=\DPs(s)\Pisin(s)\delta \mtvc
    \end{align} 
    where
    \begin{align*}
        A_\perp(s)&:=\Oms(s)A(s)-\xs'(s){\xs'}\transp(s) \DDPs(\xs(s))\nvel(s) 
        \\
        \Xi(s)&:=\nvel(s)\frac{\partial }{\partial \state }\left[\frac{\partial \mtvc}{\partial \state}(s,\state)\xs'(s)+\frac{\partial \mtvc}{\partial s}(s,\state)\right]\Bigg\lvert_{x=\xs(s)}
        \\
        B_\perp(s)&:=\Oms(s)B(s)
    \end{align*}
     given $A(s):=Df(\xs(s))$, $B(s):=g(\xs(s))$, $\nvel(s):=\DPs(s)f(\xs(s))$ and  with   $\Pisin(\cdot)$ as defined in \eqref{eq:inverseofMinimalJacobian}.\footnote{{Since $\xs:\sspace\to\eta_*$ is a regular parameterization, and thus $\nvel(s):=\DPs(s)f(\xs(s))>0$, it can be useful to note that one can utilize the fact that $ \frac{d}{ds}\delta \mtvc=\frac{1}{\nvel(s)}\frac{d}{dt}\delta\mtvc$ in order to solve \eqref{eq:GentvcTVL}.    }}
\end{thm}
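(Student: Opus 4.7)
The plan is to linearize the dynamics $\Dmtvc = D\mtvc(s,x)[f(x)+g(x)u]$ directly with respect to $x$, and then translate the resulting Jacobian into coordinates on $\delta\mtvc$ by inverting the map $\delta\mtvc = \Psi(s)\delta x$ on the transverse subspace $\ker\DPs$. The input term is immediate: evaluating $D\mtvc(s,\xs(s)) = \Psi(s) = \Pis(s)\Oms(s)$ gives the control matrix $\Pis(s)\Oms(s)g(\xs(s)) = \Pis(s)B_\perp(s)$. The bulk of the work is therefore the Jacobian of the drift $H(s,x) := D\mtvc(s,x)f(x)$ along the orbit.

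To keep this manageable, I would decompose $H$ as a sum of three terms, each of which vanishes at $x=\xs(s)$:
\[
H(s,x) = \tfrac{\partial\mtvc}{\partial x}(s,x)[f(x)-f(\xs(s))] + \tfrac{\partial\mtvc}{\partial s}(s,x)[\DP(x)f(x)-\nvel(s)] + \Big[\tfrac{\partial\mtvc}{\partial x}(s,x)\xs'(s)+\tfrac{\partial\mtvc}{\partial s}(s,x)\Big]\nvel(s).
\]
A short computation using $f(\xs(s))=\xs'(s)\nvel(s)$ confirms the sum equals $H$; the first two summands vanish on the orbit by their bracketed factors and the third by relation \eqref{eq:dyds}. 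The merit of this split is that when differentiating with respect to $x$ and evaluating at $x=\xs(s)$, every second-derivative-of-$\mtvc$ term multiplying a vanishing factor drops out, leaving only the ``product rule'' pieces. Concretely, these would yield $\Pis(s)A(s)$ from the first summand; $-\Pis(s)\xs'(s)[\nvel(s){\xs'}\transp(s)\DDPs(\xs(s)) + \DPs(s)A(s)]$ from the second, after using the Hessian identity $\frac{\partial}{\partial x}[\DP(x)f(x)]\big|_{x=\xs(s)} = \nvel(s){\xs'}\transp(s)\DDPs(\xs(s)) + \DPs(s)A(s)$; and $\Xi(s)$ from the third.

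Summing and regrouping gives $\frac{\partial H}{\partial x}(s,\xs(s)) = \Pis(s)[\Oms(s)A(s) - \xs'(s){\xs'}\transp(s)\DDPs(\xs(s))\nvel(s)] + \Xi(s) = \Pis(s)A_\perp(s) + \Xi(s)$. To convert this partial derivative (with $s$ held fixed) into the Jacobian of the actual vector field $h(x) := H(\prj(x),x)$, I would use the identity $H(s,\xs(s)) \equiv 0$, which upon differentiation in $s$ yields $\frac{\partial H}{\partial s}(s,\xs(s)) = -\frac{\partial H}{\partial x}(s,\xs(s))\xs'(s)$. The chain-rule contribution $\frac{\partial H}{\partial s}\DPs$ then combines with $\frac{\partial H}{\partial x}$ to produce exactly $\frac{\partial H}{\partial x}(s,\xs(s))\Oms(s)$, so that $\frac{d}{dt}\delta\mtvc = [\Pis(s) A_\perp(s) + \Xi(s)]\Oms(s)\delta x + \Pis(s) B_\perp(s) u$.

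Finally, $\Oms(s)\delta x$ must be rewritten in terms of $\delta\mtvc$. Since $\delta\mtvc = \Pis(s)\Oms(s)\delta x$ and $\Pis$ has full rank $\min(N,n)$ by Def.~\ref{def:TVC}, the transverse variation $\delta x_\perp := \Oms(s)\delta x \in \ker\DPs$ is uniquely recoverable from $\delta\mtvc$, and a short case analysis confirms that each formula in \eqref{eq:inverseofMinimalJacobian} implements this inverse: for $N=n-1$, the matrix $\Oms\Psi\transp[\Psi\Psi\transp]\inv$ composes with $\Pis$ to give the identity on $\ker\DPs$ (crucially because $\Oms\xs' = 0$); for $N\ge n$, the formulas are the genuine two-sided or left inverses of $\Pis$, and the algebraic constraint $\DPs\Pisin\delta\mtvc = 0$ simply records that the recovered variation must live in $\ker\DPs$. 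I expect the main obstacle to be the careful bookkeeping of the Hessian term $\DDPs(\xs(s))$ arising from $\frac{\partial}{\partial x}[\DP(x)f(x)]$, and verifying that the three formally distinct pseudoinverse formulas indeed all recover the same transverse variation.
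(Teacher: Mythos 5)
Your proposal is correct and follows essentially the same route as the paper's own proof: the control term is read off directly, the drift $F(s,x)=D\mtvc(s,x)f(x)$ is linearized by exploiting that it vanishes identically along the orbit, the partial Jacobian is identified as $\Pis(s)A_\perp(s)+\Xi(s)$, and the state variation is converted into $\delta\mtvc$ through $\Pisin$ with the constraint $\DPs(s)\Pisin(s)\delta\mtvc=0$ playing exactly the role it plays in the paper's Lemma on $\Pisin$. Your three-term decomposition of the drift and the identity $\frac{\partial H}{\partial s}(s,\xs(s))=-\frac{\partial H}{\partial x}(s,\xs(s))\xs'(s)$ merely spell out the computation the paper labels ``straightforward'', so the difference is one of bookkeeping rather than of method.
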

\if\withproofs1
The proof of Theorem~\ref{theorem:TVLGencoords} is given in \ref{app:ProofOfGenTVL}. 
\fi

While there exists  several known explicit expressions for transverse linearizations  in the literature  (see e.g. \cite[Proposition 1.4]{hauser1994converse}, \cite[Theorem 12]{mohammadi2018dynamic}, \citep[Theorem 2]{shiriaev2010transverse}, \cite[Equation (4.23)]{leonov1995local}), they are all only valid for a specific class of coordinates or for specific choices of the projection operator.  Theorem~\ref{theorem:TVLGencoords}, on the other hand, provides explicit expressions valid for any set of transverse coordinates, and just as importantly, for  any choice of the projection operator. 
Also note that, while  Theorem 12 in \cite{mohammadi2018dynamic} provides equivalent expressions for the case when $N=n-1$, the proof of their statement is only valid whenever $\theta(s)$, as defined  in \eqref{eq:GenDpsDef},  is exactly zero for all $s\in \sspace$. This is due to their use 
of the pseudo-inverse of $\Psi$ as $\Pisin$, i.e. $\Pisin(s)=\Psi\transp(s){[\Psi(s)\Psi\transp(s)]}\inv$ (cf. $dH_{\varphi(\vartheta)}^{\dagger}$ therein). While that requires  $\Oms(s)=\Oms\transp(s)$ for $\DPs(s)\Pisin(s)\delta \mtvc=0$ to hold, and thus also the relation $\delta \state=\Pisin(s)\delta \mtvc+\xs'(s)\delta s$ between the differentials, it is here  satisfied directly by the slight modification of $\Pisin$ as  given by \eqref{eq:inverseofMinimalJacobian}.

To see the equivalence between the expression given in \cite[Theorem 12]{mohammadi2018dynamic} and \eqref{eq:GentvcTVL} for $N=n-1$, it is enough to note that for a $\mathcal{C}^1$-mapping $f_\perp:\Ri{n}\to\Ri{n-1}$ satisfying $f_\perp(\xs(s))\equiv 0$ for all $s\in\sspace$, then $Df_\perp(\xs(s))\xs'(s)\equiv 0$, and hence $Df_\perp(\xs(s))\Oms(s)=Df_\perp(\xs(s))$.  
\begin{cor}\label{cor:minTVL}
 Le $\mtvc:\Ri{n}\to\Ri{n-1}$ be a valid (minimal)  set of transverse coordinates defined independently of a projection operator, $\prj(\cdot)$, that is $\mtvc=\mtvc(\state)$. Then the linearization of their dynamics  \eqref{eq:minTVCdynamics} evaluated along the solution \eqref{eq:NomOrbit} is described by the    linear-periodic system
    \begin{align}\label{eq:mintvcTVL} \
        \frac{d}{d\mg}\delta \mtvc =\frac{1}{\nvel(s)}\left[Df_\perp(\xs(s))\Psi\pinv(s)\delta \mtvc+g_\perp(\xs(s))u \right]
    \end{align} 
    where $f_\perp(\state):=D\mtvc(\state)f(\state)$, $g_\perp(\state):=D\mtvc(\state)g(\state)$, $\nvel(s):=\DPs(s)f(\xs(s))$  and $\Psi\pinv(s)$ is the pseudo- (Moore--Penrose) inverse of $D\mtvc(\xs(s))$.\qed
\end{cor}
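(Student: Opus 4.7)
The plan is to deduce Corollary~\ref{cor:minTVL} from Theorem~\ref{theorem:TVLGencoords} by specializing to $N=n-1$ and to $\mtvc=\mtvc(\state)$, in which case $\frac{\partial \mtvc}{\partial s}\equiv 0$. In this setting one has $\Pis(s)=D\mtvc(\xs(s))=\Psi(s)$, and by \eqref{eq:inverseofMinimalJacobian} the matrix $\Pisin(s)=\Oms(s)\Psi\pinv(s)$, with $\Psi\pinv(s)=\Psi\transp(s)[\Psi(s)\Psi\transp(s)]\inv$ the Moore--Penrose pseudoinverse (well-defined since Def.~\ref{def:TVC} forces $\Psi(s)$ to have full row rank).

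The first step is to simplify the input channel. Differentiating the identity $\mtvc(\xs(s))\equiv 0$ gives $\Psi(s)\xs'(s)\equiv 0$, and hence $\Psi(s)\Oms(s)=\Psi(s)$. Therefore $\Pis(s)B_\perp(s)=\Psi(s)g(\xs(s))=g_\perp(\xs(s))$. The same identity $\Psi(s)\xs'(s)=0$ annihilates the Hessian-of-$\prj$ contribution inside $A_\perp(s)$, reducing $\Pis(s)A_\perp(s)$ to $\Psi(s)A(s)$.

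The key step is identifying the drift coefficient $\Pis(s)A_\perp(s)+\Xi(s)$ with $Df_\perp(\xs(s))$. Applying the product rule to $f_\perp(\state)=D\mtvc(\state)f(\state)$ and evaluating at $\state=\xs(s)$ yields $Df_\perp(\xs(s))=\Psi(s)A(s)+H(s)$, where $H(s)$ is the matrix whose $i$-th row is $[D^2\mtvc_i(\xs(s))\,f(\xs(s))]\transp$. Using the identity $f(\xs(s))=\nvel(s)\xs'(s)$ (which follows from the invariance of $\eta_*$ under the undriven flow together with $\DPs(s)\xs'(s)\equiv 1$), direct inspection of the definition of $\Xi(s)$ in Theorem~\ref{theorem:TVLGencoords} shows that $H(s)$ coincides with $\Xi(s)$ whenever $\partial\mtvc/\partial s\equiv 0$. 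This bookkeeping identification $H(s)=\Xi(s)$ is the only non-formal step, though it is a short calculation; combined with the previous paragraph it yields $\Pis(s)A_\perp(s)+\Xi(s)=Df_\perp(\xs(s))$.

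To close, I would remove the residual $\Oms(s)$ factor. Since $f_\perp(\xs(s))\equiv 0$, differentiating once in $s$ gives $Df_\perp(\xs(s))\xs'(s)\equiv 0$, so that $Df_\perp(\xs(s))\Oms(s)=Df_\perp(\xs(s))$, whence
\begin{equation*}
    Df_\perp(\xs(s))\Pisin(s)=Df_\perp(\xs(s))\Psi\pinv(s).
\end{equation*}
The algebraic constraint $\DPs(s)\Pisin(s)\delta\mtvc=0$ in \eqref{eq:GentvcTVL} is automatically satisfied via $\DPs(s)\Oms(s)\equiv 0$ from Lemma~\ref{lemma:Omegas}, so no constraint survives, and \eqref{eq:GentvcTVL} reduces to an ordinary linear-periodic ODE in $\delta\mtvc$. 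Converting the time derivative to an $s$-derivative through $\frac{d}{ds}\delta\mtvc=\frac{1}{\nvel(s)}\frac{d}{dt}\delta\mtvc$ then produces \eqref{eq:mintvcTVL}.
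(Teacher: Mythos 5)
Your proposal is correct and follows essentially the same route as the paper: the paper presents Corollary~\ref{cor:minTVL} as a direct specialization of Theorem~\ref{theorem:TVLGencoords} to $N=n-1$ with $\partial\mtvc/\partial s\equiv 0$, combined with exactly the observation you use, namely that $f_\perp(\xs(s))\equiv 0$ implies $Df_\perp(\xs(s))\xs'(s)\equiv 0$ and hence $Df_\perp(\xs(s))\Oms(s)=Df_\perp(\xs(s))$, which removes the residual $\Oms$ in $\Pisin$ and makes the transversality constraint vacuous. Your explicit verification that $\Xi(s)$ equals the Hessian term of $Df_\perp(\xs(s))$ via $f(\xs(s))=\nvel(s)\xs'(s)$ is the same bookkeeping the paper leaves implicit.
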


As stated in the introduction, the importance of  Theorem~\ref{theorem:TVLGencoords}, or equivalently Corollary~\ref{cor:minTVL},  is due to the exponential stability of linearized transverse dynamics implying asymptotic stability of the orbit. The convergence to $\eta_*$, however, does of course not mean that the system will be in phase with the nominal solution $\state_*(t)$. This well-known phase-shift property of orbital stability can be easily derived from the following statement, whose  proof is straightforward and follows the same lines as the proof of Theorem~\ref{theorem:TVLGencoords}.
\begin{lem}\label{lem:FAphaseshift}
Given a projection operator $\prj(\cdot) $ and transverse coordinates $\mtvc\in\Ri{N}$,  the linear-periodic system
\begin{equation*}
    \frac{d}{dt}\delta {\psi}=\DPs(s)\left[Df(\xs(s))-A_\perp(s)\right]\Pisin(s)\delta \mtvc+\DPs(s) g(\xs)(s)u
\end{equation*}
is the first approximation system of the dynamics of $\psi:=\int_0^t (\dot{s}(\sigma)-\nvel(s(\sigma)))d\sigma$ along $\eta_*$.
\end{lem}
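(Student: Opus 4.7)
The plan is to linearize the scalar ODE governing $\psi$ directly, mirroring the approach used in the proof of Theorem~\ref{theorem:TVLGencoords}. By definition, $\dot\psi = \dot{s} - \nvel(s)$ with $s=\prj(\state)$, and substituting $\dot{s}=\DP(\state)\Dstate$ for \eqref{eq:DynSYs} yields
\[
\dot\psi \,=\, h(\state,u):=\DP(\state)\bigl[f(\state)+g(\state)u\bigr] - \nvel(\prj(\state)).
\]
Since $\prj(\xs(s))\equiv s$ and $\nvel(s):=\DPs(s)f(\xs(s))$, one has $h(\xs(s),0)\equiv 0$, so $\psi\equiv 0$ along the unperturbed orbit, and its first approximation reads
\[
\delta\dot\psi \,=\, \tfrac{\partial h}{\partial \state}(\xs(s),0)\,\delta\state + \DPs(s)g(\xs(s))\delta u.
\]

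Next I would compute $\tfrac{\partial h}{\partial \state}(\xs(s),0)$ via the chain rule, obtaining---after using the symmetry of $\DDPs$ and the relation $f(\xs(s))=\nvel(s)\xs'(s)$---the expression
\[
\tfrac{\partial h}{\partial \state}(\xs(s),0) \,=\, \nvel(s){\xs'}\transp(s)\DDPs(\xs(s)) + \DPs(s)A(s) - \nvel'(s)\DPs(s).
\]
Then, following the same decomposition as in the proof of Theorem~\ref{theorem:TVLGencoords}, I would substitute $\delta\state = \Pisin(s)\delta\mtvc + \xs'(s)\delta s$. The tangential ($\delta s$) component must drop out identically; verifying this is the main technical step and it relies on differentiating \eqref{eqDpDxs} with respect to $s$ (which yields ${\xs'}\transp(s)\DDPs(\xs(s))\xs'(s) = -\DPs(s)\xs''(s)$) together with the explicit expression for $\nvel'(s)$ obtained from $\nvel(s) = \DPs(s)f(\xs(s))$; the two contributions cancel exactly.

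For the transverse component, the algebraic constraint $\DPs(s)\Pisin(s)\delta\mtvc \equiv 0$ from \eqref{eq:GentvcTVL} annihilates the $\nvel'(s)\DPs(s)$ term, leaving
\[
\delta\dot\psi \,=\, \bigl[\DPs(s)A(s) + \nvel(s){\xs'}\transp(s)\DDPs(\xs(s))\bigr]\Pisin(s)\delta\mtvc + \DPs(s)g(\xs(s))\delta u.
\]
The final step is a brief algebraic check that the bracketed expression equals $\DPs(s)[A(s)-A_\perp(s)]$: expanding $\Oms(s)=\I{n}-\xs'(s)\DPs(s)$ inside the definition of $A_\perp(s)$ and applying $\DPs(s)\xs'(s)=1$ twice reproduces precisely the right-hand side stated in the lemma. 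I expect the vanishing-of-the-tangential-component calculation to be the main obstacle, since it requires careful tracking of the second-order terms involving $\DDPs$; every other step amounts to routine chain-rule bookkeeping.
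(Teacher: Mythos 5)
Your proof is correct and follows exactly the route the paper indicates (it gives no explicit proof, stating only that the argument ``follows the same lines as the proof of Theorem~\ref{theorem:TVLGencoords}''): linearize $\dot\psi=\DP(\state)[f(\state)+g(\state)u]-\nvel(\prj(\state))$ along $\eta_*$, substitute $\delta\state=\Pisin(s)\delta\mtvc+\xs'(s)\delta s$, verify the tangential part cancels via $\nvel'(s)=\nvel(s){\xs'}\transp\DDPs(\xs(s))\xs'+\DPs(s)A(s)\xs'(s)$, and use $\DPs(s)\Pisin(s)\delta\mtvc=0$ plus $\DPs(s)[A(s)-A_\perp(s)]=\DPs(s)A(s)+\nvel(s){\xs'}\transp(s)\DDPs(\xs(s))$ to reach the stated form; all of these identities check out.
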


It is therefore clear that  $\dot{s}\equiv\nvel(s)$ if $\mtvc\equiv u\equiv 0$, showing that the system might not be  in phase with the nominal solution after convergence to the orbit.  Moreover, it is evident that this system does not influence the stability of the orbit, such that one only needs to consider the linearized transverse dynamics in this regard.

While analysis of the system \eqref{eq:mintvcTVL} given a minimal set of coordinates is quite straightforward, one must take into account the transversality constraint when considering an excessive set of coordinates in \eqref{eq:GentvcTVL}. Thus, in order to provide  further insight into the  transverse linearization of an excessive set of coordinates, we will focus on a specific set of orthogonal coordinates in the sequel.
\section{A Generic set of Excessive Orthogonal  Coordinates}\label{sec:OrtCoords}
Consider again the excessive coordinates   previously defined in \eqref{eq:TVC}, namely $\tvc:=\state-\xs(s)$. Using the fact that $\DP(\cdot)$ and $f(\cdot)$ are assumed to be $\mathcal{C}^1$, one may use their first-order Taylor expansions about $\eta_*$ in order to show that the transverse dynamics \eqref{eq:TVD1} then can be rewritten as
\begin{equation}\label{eq:LTVD}
    \Dtvc=A_\perp(s)\tvc+\Omega(\state)g(\state) \ac+\Delta(s,\tvc),
\end{equation}
where $\|\Delta(\cdot,\tvc)\|=O(\|\tvc\|^2)$, that is
\begin{equation*}
    \lim_{\tvc\to 0}\frac{\|\Delta(\cdot,\tvc)\|}{\|\tvc\|}=0.
\end{equation*}
The choice of notation in  Theorem~\ref{theorem:TVLGencoords} thus becomes clear by its following corollary. 
\begin{cor}\label{cor:TVLforNatExCoords}
 The constrained linear-periodic system
\begin{equation}\label{eq:TVLsys}
    \frac{d}{dt}\delta \tvc=A_\perp(s)\delta \tvc+B_\perp(s)u, \quad \DPs(s)\delta \tvc=0, 
\end{equation}
corresponds to the linearization along \eqref{eq:NomOrbit}  of the dynamics  of the excessive set of coordinates defined in \eqref{eq:TVC}.
\qed
\end{cor}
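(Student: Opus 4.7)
The plan is to recognize the coordinates \eqref{eq:TVC} as a particular instance of a valid (excessive) set of transverse coordinates in the sense of Def.~\ref{def:TVC} with $N=n$, and then invoke Theorem~\ref{theorem:TVLGencoords} directly. Specifically, I would set $\mtvc(s,\state) := \state - \xs(s)$ so that $\tvc = \mtvc(\prj(\state),\state)$. One readily verifies $\mtvc(s,\xs(s)) \equiv 0$, that $\frac{\partial \mtvc}{\partial \state}(s,\xs(s)) = \I{n}$ has full rank $n = \min(N,n)$, and that $D\mtvc(s,\xs(s)) = \Oms(s)$ has rank $n-1$ by Lemma~\ref{lemma:Omegas}. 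Thus $\mtvc$ satisfies the hypotheses of Def.~\ref{def:TVC}.

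The next step is to specialize the ingredients of Theorem~\ref{theorem:TVLGencoords}. Since $\frac{\partial \mtvc}{\partial \state} \equiv \I{n}$, one immediately obtains $\Pis(s) = \I{n}$, and the $N=n$ branch of \eqref{eq:inverseofMinimalJacobian} gives $\Pisin(s) = \I{n}$. The key observation is that the $\Xi(s)$ term vanishes identically: the bracketed expression
\[
\frac{\partial \mtvc}{\partial \state}(s,\state)\,\xs'(s) + \frac{\partial \mtvc}{\partial s}(s,\state) = \xs'(s) - \xs'(s) = 0
\]
is constant (indeed zero) in $\state$, so its Jacobian in $\state$ vanishes, and hence $\Xi(s)\equiv 0$. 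Substituting $\Pis = \Pisin = \I{n}$ and $\Xi \equiv 0$ into \eqref{eq:GentvcTVL} collapses the general formula to $\frac{d}{dt}\delta\tvc = A_\perp(s)\delta\tvc + B_\perp(s)u$ with algebraic constraint $\DPs(s)\delta\tvc = 0$, which is precisely \eqref{eq:TVLsys}.

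As a cross-check (and an alternative route independent of Theorem~\ref{theorem:TVLGencoords}), one can argue directly from the expansion \eqref{eq:LTVD}: the term $\Delta(s,\tvc)$ is $O(\|\tvc\|^2)$ and so drops from the first-order linearization, whereas $\Omega(\state)g(\state)u$ evaluated to zeroth order at $\state=\xs(s)$ equals $\Omega(s)g(\xs(s))u = B_\perp(s)u$, with any remaining contribution being a higher-order cross term in $\tvc$ and $u$. The algebraic constraint is then immediate from $\delta\tvc = \Oms(s)\delta\state$ (see \eqref{eq:deltaGTVCdeltaState}) together with Lemma~\ref{lemma:Omegas}, which asserts that $\DPs(s)$ left-annihilates $\Oms(s)$.

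I do not expect any real obstacle, since the statement is essentially a direct specialization of Theorem~\ref{theorem:TVLGencoords}. The only points demanding a little care are the cancellation producing $\Xi(s)\equiv 0$ and the verification that the $N=n$ branch of \eqref{eq:inverseofMinimalJacobian} is the one that applies; both follow at once from the identity $\frac{\partial \mtvc}{\partial \state} \equiv \I{n}$.
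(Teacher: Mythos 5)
Your proposal is correct and is essentially the paper's own route: the paper treats the corollary as an immediate specialization of Theorem~\ref{theorem:TVLGencoords} to $\mtvc(s,\state)=\state-\xs(s)$ (so that $\Pis=\Pisin=\I{n}$ and $\Xi\equiv 0$), with the Taylor-expansion view via \eqref{eq:LTVD} given in the surrounding text exactly as in your cross-check. Your verification of Def.~\ref{def:TVC} and the vanishing of $\Xi$ just makes explicit what the paper leaves implicit.
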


As previously stated, the coordinates $\tvc$ will depend upon the choice of $\prj(\cdot)$. While there  in general will exist  many valid candidates for this projection operator, all with different properties and resulting in  different transverse hypersurfaces (moving Poincaré sections) on which the coordinates $\tvc$ evolve, we will from now on consider those satisfying the orthogonality condition:
\begin{equation}\label{eq:OrthogonalityCondition}
    {\xs'}\transp (s)\tvc\equiv 0.
\end{equation}
Note that this is locally equivalent to  $s=\argmin_{s\in\mathcal{S}}\|\state-\xs(s)\|^2$, and so the  Jacobian of this $\prj(\cdot)$ is  given by
\begin{equation}
    \DP(\state)=\frac{{\xs'}\transp(s)}{\|\xs'(s)\|^2-{\xs''}\transp(s)\left(\state-\xs(s)\right)},
\end{equation}
while, moreover, it can be shown     that  $\Delta(\cdot)$ then satisfies ${\xs'}\transp(s) \Delta(s,\tvc)\equiv 0$ \citep{leonov2006generalization}.
In addition, using   \eqref{eq:OrthogonalityCondition} and that 
$\DDP(\xs(s))\nvel(s)\tvc=\frac{\xs'(s){\xs'}\transp(s)}{\|\xs'(s)\|^4}A\transp(s) \tvc$, the matrix function $A_\perp(\cdot)$ can then be simplified to
\begin{equation}\label{eq:AperpSimp}
    A_\perp(s):=\Oms(s)A(s)-\frac{\xs'(s){\xs'}\transp(s)}{\|\xs'(s)\|^2}A\transp(s).
\end{equation}
 Thus the {linearized transverse dynamics} are  given according to Corollary~\ref{cor:TVLforNatExCoords} with  \eqref{eq:AperpSimp} and  $\DPs(s)=\frac{{\xs'}\transp(s)}{\|\xs'(s)\|^2}$,

Note that the coordinates \eqref{eq:TVC} together with the   orthogonality condition \eqref{eq:OrthogonalityCondition} have been considered several times times before in relation to the study of the (in-)stability of solutions of autonomous dynamical systems (see e.g. \cite{borg1960condition,hartman1962global,zubov1999theory,leonov2006generalization,hauser1994converse}). However, they have not, to our best knowledge, been used together  for the purpose  of designing orbitally stabilizing feedback controllers for nonlinear systems of the form \eqref{eq:DynSYs}.
For this purpose, however, 
  the relation ${\xs'}\transp(s) \delta \tvc\equiv 0$  is of particular interest. This is because, unlike a minimal set of coordinates in which the transversality condition $\DPs(s)\Pisin(s)\delta \mtvc=0$ in \eqref{eq:GentvcTVL} is satisfied directly through $\Pisin$, it must be satisfied through the coordinates themselves for an excessive set. 

\subsection{Limitations of the excessive transverse linearization}\label{sec:LimTVLD}
Consider the linear system 
\begin{equation}\label{eq:TVLcompsys}
\dot{y}=A_\perp(s)y+B_\perp(s)u
\end{equation}
corresponding to  \eqref{eq:TVLsys},  with $A_\perp$ as in \eqref{eq:AperpSimp}  but without the transversality condition ${\xs'}\transp(s) y\equiv 0$. 
It can be shown that the undriven system  $(u\equiv 0$)  then has the solution
\begin{equation}\label{eq:NonVanSol}
    y_{\parallel}=\frac{\xs'(s)}{\|\xs'(s)\|^2\nvel(s)}=\frac{\xs'(s)}{{\xs'}\transp(s) f_s(s)},
\end{equation}
whose characteristic exponent\footnote{
The number (or the symbols, $\pm\infty$), given by  the formula $\limsup_{t\to+\infty}\frac{1}{t}\ln\|x(t)\|$ is called the characteristic exponent of the continuous function $x:[0,\infty)\to\Ri{n}$ \citep{leonov2006generalization}.
} 
evidently is exactly zero. Moreover, an additional $(n-1)$  linearly independent solutions of the undriven system can be found, which we denote $y^1_\perp(\cdot),\dots,y^{n-1}_\perp(\cdot)$, and which form a basis of the kernel  of $\DPs(s)$ for a given $s\in\sspace$ (it can be shown that $\frac{d}{dt}(y_{\parallel}\transp y_\perp^i)\equiv 0$), and hence satisfy  condition \eqref{eq:OrthogonalityCondition}.  Using these solutions,  let $\Pperp(s)=[\pperp^1(s),\dots,\pperp^{n-1}(s)]$ denote a smooth normalized basis of the kernel of $\DPs(s)$, with $\pperp^i(\cdot)$ defined by $\pperp^i(s(t))=y^i_\perp(t)/\|y^i_\perp(t)\|$, and let $\Pperpinv$ denote its pseudo-inverse, that is $\Pperpinv:={(\Pperp\transp\Pperp)}\inv \Pperp\transp$.

Consider now the first approximation (variational) system of \eqref{eq:DynSYs} along the curve \eqref{eq:NomOrbit}:
\begin{equation}\label{eq:FirstApprox}
    \frac{d}{dt}\delta \state=A(s)\delta \state+B(s)u.
\end{equation}
  The following statement can then be seen as analogous to the Andronov--Vitt theorem for the system \eqref{eq:TVLcompsys}.
\begin{prop}\label{prop:limETL}
    The system \eqref{eq:TVLcompsys} has $(n-1)$ linearly independent solutions of the  form $\Pperp(s(t))\xi_\perp(t)$ with $\xi_\perp\in\Ri{n-1}$ a solution to the $(n-1)$-dimensional system 
    \begin{equation}\label{eq:TVLsubSys}
    \dot{\xi}_\perp= \Pperp\transp(s) A(s) \Pperp(s)\xi_\perp+\Pperp^\dagger(s) B(s)(s)u.
    \end{equation}
    In addition, it 
    has a solution with a non-vanishing part in the direction of  \eqref{eq:NonVanSol}   regardless of the control input $u$. \qed
\end{prop}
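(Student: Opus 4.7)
The plan is to substitute the ansatz $y(t)=\Pperp(s(t))\xi_\perp(t)$ into \eqref{eq:TVLcompsys} and project via $\Pperp\pinv(s)$. Since $\Pperp$ has orthonormal columns one has $\Pperp\pinv=\Pperp\transp$ and $\Pperp\transp\Pperp=\I{n-1}$, so after differentiating and using $\dot{s}=\nvel(s)$ along the orbit, premultiplying by $\Pperp\transp(s)$ will collapse the equation to an ODE in $\xi_\perp\in\Ri{n-1}$.

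The key algebraic simplifications I would invoke all follow from the orthogonality condition \eqref{eq:OrthogonalityCondition}: along $\eta_*$ one has $\DPs(s)={\xs'}\transp(s)/\|\xs'(s)\|^2$, so $\Pperp\transp(s)\xs'(s)=0$, whence $\Oms(s)\Pperp(s)=\Pperp(s)$ and $\Pperp\transp(s)\Oms(s)=\Pperp\transp(s)$. Applied to the simplified expression \eqref{eq:AperpSimp}, the rank-one correction $\xs'(s){\xs'}\transp(s) A\transp(s)/\|\xs'(s)\|^2$ is annihilated from the left by $\Pperp\transp$, which reduces $\Pperp\transp A_\perp\Pperp$ to $\Pperp\transp A\Pperp$ and $\Pperp\transp B_\perp$ to $\Pperp\pinv B$, recovering \eqref{eq:TVLsubSys}. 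Linear independence of the resulting $(n-1)$ solutions $y=\Pperp\xi_\perp$ of \eqref{eq:TVLcompsys} would then follow from the full column rank of $\Pperp(s)$ together with existence and uniqueness for \eqref{eq:TVLsubSys} from any basis of initial data in $\Ri{n-1}$.

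For the second claim, I would invoke superposition together with the fact, noted just before the proposition, that $y_\parallel(s(t))$ is itself a solution of the undriven system. For any admissible $u$, the solution of \eqref{eq:TVLcompsys} with initial condition $y(0)=y_\parallel(s_0)$ can be written $y(t)=y_\parallel(s(t))+y_p(t)$ where $y_p$ is the forced solution starting from zero; since $\DPs(s)y_\parallel=1/(\|\xs'(s)\|^2\nvel(s))\neq 0$, the component of $y(t)$ along $y_\parallel$ is non-zero at $t=0$ and hence not identically vanishing, regardless of $u$. The main obstacle will be the $\Pperp'(s)$ term introduced by the chain rule: upon projection it produces an extra contribution $\nvel(s)\Pperp\transp(s)\Pperp'(s)\xi_\perp$, which is antisymmetric (as follows from differentiating $\Pperp\transp\Pperp=\I{n-1}$) and must be absorbed via a gauge-like change of basis on $\ker\DPs(s)$ adapted to the homogeneous solutions $y_\perp^i$ in order to recover exactly the stated form of \eqref{eq:TVLsubSys}.
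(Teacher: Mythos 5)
Your projection-of-the-ansatz strategy stalls exactly at the point you flag as ``the main obstacle'', and that obstacle is the heart of the proof, not a detail to be absorbed later. The basis $\Pperp$ is not free to be re-chosen by a ``gauge-like change of basis'': the proposition is about the specific $\Pperp$ fixed in Sec.~\ref{sec:OrtCoords}, whose columns are the \emph{normalized homogeneous solutions} $\pperp^i(s(t))=y^i_\perp(t)/\|y^i_\perp(t)\|$. The paper's proof exploits precisely this definition: it yields the explicit evolution $\tfrac{d}{dt}\Pperp=(\I{n}-\Pperp\Pperp\transp)A_\perp\Pperp$, and with it the full change of coordinates $y=U(s)\xi$, $U=[\xs'/\|\xs'\|,\ \Pperp]$, block-diagonalizes \eqref{eq:TVLcompsys} into a scalar $\xi_\parallel$-equation and exactly \eqref{eq:TVLsubSys}. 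In particular, it is the derivative term $\Pperp\pinv\tfrac{d}{dt}\Pperp=(\Pperp\pinv-\Pperp\transp)A\Pperp$ that turns the naive drift $\Pperp\pinv A\Pperp$ into the stated $\Pperp\transp A\Pperp$ while leaving $\Pperp\pinv B$ in the input channel; the statement's deliberate distinction between $\Pperp\transp$ and $\Pperp\pinv$ already signals that the columns are normalized but \emph{not} assumed mutually orthonormal, so your identities $\Pperp\pinv=\Pperp\transp$, $\Pperp\transp\Pperp=\I{n-1}$, and the antisymmetry of $\Pperp\transp\Pperp'$ are unjustified. Moreover, premultiplying the ansatz by $\Pperp\pinv$ gives only a necessary condition: to conclude that $\Pperp\xi_\perp$ solves the full $n$-dimensional system you must also check that the residual along $\xs'$ vanishes, which again requires the formula for $\tfrac{d}{dt}\Pperp$ together with $\Pperp\Pperpinv=\Oms$. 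None of this is carried out in your proposal.

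The argument for the second claim is also insufficient. Showing that the component along $y_\parallel$ is nonzero at $t=0$ proves only that it is ``not identically vanishing'' in a trivial sense; it does not exclude that the forced response $y_p$, or drift coupling, cancels that component as $t$ grows, so it does not establish non-vanishing \emph{regardless of} $u$. The actual content of the proposition is an uncontrollability statement: in the coordinates $\xi=U\inv(s)y$ the $\xi_\parallel$-subsystem is decoupled from both $\xi_\perp$ and $u$ (the top row of $U\inv B_\perp$ and the top-right drift block are zero), hence $\xi_\parallel(t)=c/\|f(\xs(s(t)))\|$ for a constant $c$ fixed by the initial data, a direction with zero characteristic exponent that no feedback can remove. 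You need this decoupling (or an equivalent argument that the $y_\parallel$-direction lies outside the controllable subspace), not superposition at the initial time.
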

\if\withproofs1
The proof of Proposition~\ref{prop:limETL} is stated in \ref{app:ProofOfLim}. 
\fi

An important consequence of Proposition~\ref{prop:limETL}  is the fact that the origin of the system \eqref{eq:TVLcompsys} can never be asymptotically stabilized. That is to say, even if one can find some  feedback asymptotically stabilizing the origin of the system \eqref{eq:TVLsys}, and consequently the periodic orbit, the system \eqref{eq:TVLcompsys} will regardless have a non-vanishing solution whose characteristic exponent is  zero.  Thus the  usefulness of this system in terms of control design is  limited due to its non-stabilizable subspace.   On the other hand, we can infer that if the pair
$(\Pperp\transp A\Pperp,\Pperp^\dagger B\Pperp)  $
 is stabilizable, then we can  stabilize the orbit utilizing some controller designed to stabilize the subsystem \eqref{eq:TVLsubSys}. 
The obvious  alternative is therefore to  try to directly stabilize this subsystem. Yet, this requires knowledge of the  basis $\Pperp(\cdot)$.

Clearly it would instead be beneficial to find some way of stabilizing the subsystem \eqref{eq:TVLsubSys} without the need to form  $\Pperp(\cdot)$. 
In this regard, we will  introduce next a linear  comparison system of  \eqref{eq:TVLcompsys}, for which, under conditions we state in Proposition~\ref{theorem:mainResult}, the asymptotic stability of its origin implies asymptotic stability of the origin of the subsystem \eqref{eq:TVLsubSys} and consequently the asymptotic orbital stability of the nominal solution.

\subsection{The existence of a comparison system}\label{sec:AuxSys}

Suppose we left-multiply both sides of \eqref{eq:TVLcompsys} by the matrix function $\Oms(s)$. Utilizing its properties  (see Lemma~\ref{lemma:Omegas}), one can then rewrite the system on several different equivalent forms,  with the following among them:
\begin{equation}\label{eq:CompSysInit}
    \Oms(s)\left[\dot y -\Oms(s)\left(A(s)y+B(s)u\right)\right]=0.
\end{equation}
Consider, therefore, the   linear-periodic  system
\begin{equation}\label{eq:CompSys}
    \dot{w}=\Oms(s)A(s)w+\Oms(s)B(s)v, \ w\in\Ri{n}, \ v\in\Ri{m},
\end{equation}
corresponding to the terms inside the brackets of the descriptor system \eqref{eq:CompSysInit} being set to zero.
Roughly speaking, we will show that if there exists a  feedback of the form $v=K(s)w$ which ``sufficiently" stabilizes the origin of this \emph{comparison system}, then the controller $u=K(s)\delta \tvc$ stabilizes the origin of the linearized transverse dynamics \eqref{eq:TVLsys} as well. Thus this comparison system  can allow one to  find a stabilizing feedback for \eqref{eq:TVLsys}  without the need to circumvent the uncontrollable subspace always present in \eqref{eq:TVLcompsys} and without having to compute the Hessian $\DDP(\cdot)$.
Indeed, there are several connections between these systems, such as the following  spectrum condition.
\begin{lem}\label{lemma:spectrumCond}
    Consider the system \eqref{eq:DynSYs} with  the feedback $u=K(\prj(\state))[\state-\xs(\prj(x))]$ for some  Lipschitz continuous matrix function $K:\sspace\to \Ri{m\times n}$. Then the (minimal) sum of  the characteristic exponents of  the  systems \eqref{eq:TVLcompsys}, \eqref{eq:FirstApprox} and \eqref{eq:CompSys} are the same. \qed
\end{lem}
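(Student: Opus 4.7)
The proof will rely on the Liouville-type identity for linear $T$-periodic systems: the (minimal) sum of characteristic exponents of $\dot{z}=M(t)z$ coincides with the time-average $\frac{1}{T}\int_0^T \trace M(t)\,dt$. It therefore suffices to show that this averaged trace is the same for all three closed-loop system matrices. The plan is to (i) substitute the feedback into each system, (ii) use the cyclicity of the trace to cancel the common ``control'' contribution, and (iii) prove that the remaining differences $\trace A(s)-\trace[\Oms(s)A(s)]$ and $\trace[\Oms(s)A(s)]-\trace A_\perp(s)$ integrate to zero over one period via periodicity of $\nvel$ and $\|\xs'\|$.

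For step (i), the linearization along $\eta_*$ of the feedback $u=K(\prj(\state))[\state-\xs(\prj(\state))]$ gives $\delta u=K(s)\Oms(s)\delta\state$ by \eqref{eq:deltaGTVCdeltaState}, while for \eqref{eq:TVLcompsys} and \eqref{eq:CompSys} the feedback enters directly as $u=K(s)y$ and $v=K(s)w$. The resulting closed-loop matrices are
\[
M_1(s)=A(s)+B(s)K(s)\Oms(s),\quad M_2(s)=A_\perp(s)+\Oms(s)B(s)K(s),\quad M_3(s)=\Oms(s)[A(s)+B(s)K(s)].
\]
By the cyclicity of the trace, $\trace[B(s)K(s)\Oms(s)]=\trace[\Oms(s)B(s)K(s)]$, so only $\trace A(s)$, $\trace[\Oms(s)A(s)]$ and $\trace A_\perp(s)$ need to be compared.

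The main technical step is (iii). Using Lemma~\ref{lemma:Omegas} together with the orthogonal form of $A_\perp$ in \eqref{eq:AperpSimp}, a direct computation produces
\[
\trace A(s)-\trace[\Oms(s)A(s)]=\trace[\Oms(s)A(s)]-\trace A_\perp(s)=\frac{\xs'\transp(s)\,A(s)\,\xs'(s)}{\|\xs'(s)\|^{2}}.
\]
Differentiating the on-orbit identity $f(\xs(s))=\nvel(s)\xs'(s)$ with respect to $s$ yields $A(s)\xs'(s)=\nvel'(s)\xs'(s)+\nvel(s)\xs''(s)$, so the scalar above equals $\nvel'(s)+\nvel(s)\tfrac{d}{ds}\ln\|\xs'(s)\|$. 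Passing to time via $dt=ds/\nvel(s)$, its integral over $[0,T]$ reduces to
\[
\int_{s_0}^{s_0+s_T}\frac{\nvel'(s)}{\nvel(s)}\,ds+\int_{s_0}^{s_0+s_T}\frac{d}{ds}\ln\|\xs'(s)\|\,ds=\bigl[\ln\nvel(s)\bigr]_{s_0}^{s_0+s_T}+\bigl[\ln\|\xs'(s)\|\bigr]_{s_0}^{s_0+s_T}=0,
\]
by periodicity of $\nvel$ and $\|\xs'\|$ along $\eta_*$. Hence $\int_0^T\trace M_i(t)\,dt$ is independent of $i\in\{1,2,3\}$, and Liouville's identity delivers the equality of the (minimal) sums. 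Each closed-loop system also carries an explicit trivial zero exponent ($\dot\state_*$ solves $M_1$ since $\Oms\dot\state_*\equiv 0$; the quantity $\DPs(s)y$ is conserved under $M_2$, reproducing the non-vanishing solution of Proposition~\ref{prop:limETL}; an analogous observation applies to $M_3$), so excluding it does not change the equality.

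The main obstacle is the trace cancellation in step (iii): the argument is tailored to the orthogonal projection operator from Section~\ref{sec:OrtCoords}, where $\DPs(s)=\xs'\transp(s)/\|\xs'(s)\|^{2}$. For a general projection operator the trace differences involve the Hessian $\DDP$ in a form that need not reduce to a logarithmic derivative of a periodic quantity, and hence the strategy above would require additional hypotheses.
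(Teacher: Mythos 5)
Your argument is correct and rests on the same backbone as the paper's proof: the Lyapunov/Liouville trace formula reducing the (minimal) sum of characteristic exponents to a time-average of the trace of the closed-loop matrix, together with cyclicity of the trace to cancel the feedback contributions $\trace(BK\Oms)=\trace(\Oms BK)$. Where you differ is in how the residual trace discrepancy is shown to have zero average. The paper cites Leonov (2006) for the equivalence between the first approximation \eqref{eq:FirstApprox} and \eqref{eq:TVLcompsys}, and for \eqref{eq:TVLcompsys} versus \eqref{eq:CompSys} it notes that the extra term $\trace\big(\tfrac{\xs'{\xs'}\transp}{\|\xs'\|^2}A\transp\big)$ is exactly the coefficient of the decoupled scalar equation solved by $1/\|f(\xs(s))\|$ in the proof of Proposition~\ref{prop:limETL}, so its time-average is the characteristic exponent of a bounded, non-vanishing periodic function, i.e.\ zero. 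You instead compute both trace differences explicitly as ${\xs'}\transp A\xs'/\|\xs'\|^2=\nvel'(s)+\nvel(s)\tfrac{d}{ds}\ln\|\xs'(s)\|$ by differentiating the on-orbit identity $f(\xs(s))=\nvel(s)\xs'(s)$, and integrate this exact derivative of $\ln(\nvel\|\xs'\|)$ over one period. The two mechanisms are the same fact in disguise (since $\|f(\xs(s))\|=\nvel(s)\|\xs'(s)\|$), but your execution is more self-contained—no appeal to the literature or to the change of coordinates in \ref{app:ProofOfLim}—and it makes transparent exactly where the orthogonal choice of $\prj(\cdot)$ and the simplified form \eqref{eq:AperpSimp} enter, which is consistent with the lemma since Section~\ref{sec:OrtCoords} fixes that projection; the paper's version is shorter because it recycles structure already established elsewhere.

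Two minor points. First, the parenthetical ``(minimal)'' in the statement refers to the $\liminf$-based (lower) sum used in the trace formula for regular systems, not to discarding a trivial exponent, so your closing sentences about explicit zero-exponent solutions are not needed for the claim. Second, within those sentences the assertion that $\DPs(s)y$ is conserved under \eqref{eq:TVLcompsys} is not exact: a direct computation gives $\tfrac{d}{dt}(\DPs y)=-\big(\nvel'+2\nvel\tfrac{{\xs'}\transp\xs''}{\|\xs'\|^2}\big)\DPs y$, so $\DPs y$ is proportional to the periodic factor $1/(\nvel\|\xs'\|^2)$ rather than constant (its characteristic exponent is zero, which is all one could want there). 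Neither point affects the validity of your main computation.
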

\if\withproofs1
The proof can be found in  Appendix~\ref{app:SpecCondProof}.
\fi

    
Suppose, therefore, that a (Lipschitz continuous) matrix function $K:\sspace\to \Ri{m\times n}$ exists such that the largest characteristic exponents, $\lambda_M$, of the closed-loop system
\begin{equation}
    \dot{w}=\Omega(s)\left(A(s)+B(s)K(s)\right)w
\end{equation} 
satisfies $\lambda_M<0$; i.e. we assume \eqref{eq:CompSys} is stabilizable.  Let $W(t)$ denote the state transition (Cauchy) matrix for this system. Then, by a small modifications of theorems 2 and 4 in \cite{leonov2007time}, there exists
some number $C>0$ and a scalar functions $\zeta:[0,\infty) \to \Ri{}$ satisfying
\begin{equation}
    \lim_{t\to \infty} \frac{1}{t}\int_\tau^t \zeta(\sigma) d\sigma =\lambda_M \quad \forall \tau \ge 0,
\end{equation}
such that the following inequality
\begin{equation}\label{eq:STMbounds}
    \|W(t)W\inv(\tau)\|\le C\exp\left(\int_\tau^t \zeta(\sigma) d\sigma\right) \ \ \forall t\ge \tau \ge 0
\end{equation} 
is satisfied. The main result of this section follows.
\begin{prop}\label{theorem:mainResult} 
Let $\prj(\cdot)$ be taken as to satisfy \eqref{eq:OrthogonalityCondition}.
Suppose that $\|A(s)\|\le \alpha$ for all $s\in\sspace$
and that the  inequality
\begin{equation}\label{eq:MRcond}
    \lambda_M<-C\alpha\le 0
\end{equation}
holds. Then the controller $u=K(s)\tvc$  with $s=\prj(x)$ asymptotically stabilizes the origin of the system \eqref{eq:LTVD} and consequently renders the periodic solution of the dynamical system \eqref{eq:DynSYs}  asymptotically orbitally stable. 
\end{prop}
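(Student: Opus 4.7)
The plan is to view the closed-loop excessive transverse dynamics as a perturbation of the closed-loop comparison system. Substituting $\ac=K(s)\tvc$ into \eqref{eq:LTVD} and invoking the simplified form \eqref{eq:AperpSimp} of $A_\perp$ available under \eqref{eq:OrthogonalityCondition}, the dynamics can be written as
\begin{equation*}
\Dtvc = \Oms(s)[A(s)+B(s)K(s)]\tvc + d(s,\tvc),
\end{equation*}
where
\begin{equation*}
d(s,\tvc) := -\frac{\xs'(s){\xs'}\transp(s)}{\|\xs'(s)\|^2}A\transp(s)\tvc + \tilde\Delta(s,\tvc),
\end{equation*}
and $\tilde\Delta$ lumps together the original remainder $\Delta$ with the $O(\|\tvc\|^2)$ deviation of $\Oms(\state)g(\state)K(s)\tvc$ from $\Oms(s)B(s)K(s)\tvc$. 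The first summand of $d$ has norm at most $\alpha\|\tvc\|$, because $\xs'{\xs'}\transp/\|\xs'\|^2$ is a rank-one orthogonal projector and $\|A(s)\|\le\alpha$, while $\|\tilde\Delta(s,\tvc)\| = o(\|\tvc\|)$. The constraint ${\xs'}\transp(s)\tvc\equiv 0$ is automatically preserved by construction of the orthogonal $\prj(\cdot)$.

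Next I would apply variation of parameters against the closed-loop comparison system. Let $W(\cdot)$ denote its state-transition matrix and set $\Phi(t):=\int_0^t\zeta(\sigma)d\sigma$, so that \eqref{eq:STMbounds} reads $\|W(t)W\inv(\tau)\|\le Ce^{\Phi(t)-\Phi(\tau)}$. Duhamel's formula and a norm estimate yield
\begin{equation*}
\|\tvc(t)\| \le Ce^{\Phi(t)}\|\tvc(0)\| + C\int_0^t e^{\Phi(t)-\Phi(\tau)}\|d(s(\tau),\tvc(\tau))\|\,d\tau.
\end{equation*}
Fix any $\varepsilon>0$ and restrict attention to initial data $\tvc(0)$ small enough that $\|\tilde\Delta(\cdot,\tvc(\tau))\|\le\varepsilon\|\tvc(\tau)\|$ holds on the relevant interval. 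Multiplying by $e^{-\Phi(t)}$ and introducing $\tilde V(t):=e^{-\Phi(t)}\|\tvc(t)\|$ reduces the inequality to $\tilde V(t)\le C\|\tvc(0)\| + C(\alpha+\varepsilon)\int_0^t \tilde V(\tau)d\tau$, whence Gronwall gives
\begin{equation*}
\|\tvc(t)\|\le C\|\tvc(0)\|\exp\!\left(\Phi(t)+C(\alpha+\varepsilon)t\right).
\end{equation*}

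Using $\limsup_{t\to\infty} t\inv\Phi(t)=\lambda_M$, the upper characteristic exponent of $\|\tvc(t)\|$ is at most $\lambda_M+C(\alpha+\varepsilon)$. Hypothesis \eqref{eq:MRcond} guarantees $\lambda_M+C\alpha<0$, so for $\varepsilon$ sufficiently small this exponent is strictly negative. This both forces $\tvc(t)\to 0$ exponentially and, via a standard continuation argument, justifies a posteriori that $\tvc(t)$ never leaves the tubular region on which the $o(\|\tvc\|)$ bound on $\tilde\Delta$ is valid. Since $\|\tvc\|$ is locally equivalent to the Euclidean distance from $\state$ to $\eta_*$, local exponential attraction of the origin of \eqref{eq:LTVD} is equivalent to asymptotic orbital stability of the nominal periodic solution, as required.

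The principal obstacle lies in the Gronwall step: the exponent $\zeta(\cdot)$ is only controlled in a time-averaged sense by $\lambda_M$, so one cannot replace $\Phi(t)-\Phi(\tau)$ pointwise by $\lambda_M(t-\tau)$; one must first factor out $e^{\Phi(t)}$ and apply Gronwall to the rescaled quantity $\tilde V$. It is precisely this manoeuvre that turns the linear $O(\|\tvc\|)$ part of the disturbance into an additive shift $C\alpha$ of the exponent, and thereby forces the strict separation \eqref{eq:MRcond} between $\lambda_M$ and $C\alpha$ as well as the uniform bound $\|A(s)\|\le\alpha$.
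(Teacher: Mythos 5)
Your proposal follows the same analytic core as the paper's own proof: the closed-loop transverse dynamics are viewed as a perturbation of the closed-loop comparison system \eqref{eq:CompSys}, the perturbing term being exactly the rank-one piece of \eqref{eq:AperpSimp} bounded in norm by $\alpha$ (in the paper this same term appears, after the change of coordinates $y=U(s)\xi$ with $U(s)=[\xs'(s)/\|\xs'(s)\|,\ \Pperp(s)]$, as the block $\tilde{A}L$ with $\|\tilde{A}L\|=\|{\xs'}\transp A\transp\Pperp\|/\|\xs'\|\le\alpha$), and the conclusion is reached through \eqref{eq:STMbounds} plus Gr\"onwall applied to the exponentially rescaled norm --- precisely your $\tilde V$-manoeuvre, which the paper performs with $\phi(t)=\psi(t,0)\|\xi_\perp(t)\|$. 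The difference is in execution: the paper argues at the level of the linearization, isolating the $(n-1)$-dimensional transverse subsystem via $U(s)$, exhibiting the non-vanishing tangential solution separately, and leaving the passage from the linearization to the nonlinear claim about \eqref{eq:LTVD} implicit; you skip the coordinate change (legitimately, since $\|({\xs'{\xs'}\transp}/{\|\xs'\|^2})A\transp\tvc\|\le\alpha\|\tvc\|$ holds for all of $\Ri{n}$, so no projection onto $\ker\DPs$ is needed for the estimate) and handle the nonlinear remainders explicitly with the $\varepsilon$/continuation argument, which makes the final statement about \eqref{eq:LTVD} and orbital stability more self-contained.

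One step deserves more care than you give it. In the closed-loop system the matrices $\Oms, A, B, K$ are evaluated at $s=\prj(x(t))$, whose evolution is $\dot{s}=\nvel(s)+O(\|\tvc\|)$, whereas $W(t)$, $\zeta$ and $\lambda_M$ in \eqref{eq:STMbounds} pertain to the comparison system driven by the nominal phase; as written, your Duhamel identity pairs a transition matrix with a homogeneous part that is not quite the system it solves, and the accumulated phase drift cannot be bounded pointwise by $o(\|\tvc\|)$. The standard repair, in the spirit of the paper's footnote on $\frac{d}{ds}=\frac{1}{\nvel(s)}\frac{d}{dt}$, is to reparameterize by $s$ (possible since $\dot{s}>0$ near $\eta_*$): the coefficients then depend on the independent variable identically in both systems, the rate mismatch enters only through the factor $\nvel(s)/\dot{s}=1+O(\|\tvc\|)$ and is absorbable into $\tilde\Delta$, and characteristic exponents in $s$ and in $t$ have the same sign because $\nvel$ is bounded away from zero and infinity. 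With that adjustment your estimate and the conclusion $\lambda_M+C(\alpha+\varepsilon)<0$ go through, matching the paper's result.
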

\if\withproofs1
The proof of this statement is given in \ref{app:ProofMR}. 
\fi 

\begin{rem}
    The value of the above statement is not in the condition  \eqref{eq:MRcond} per se. Rather, its importance is simply due to the fact that it shows the possibility of orbitally stabilizing the solution by designing a stabilizing feedback for the comparison system \eqref{eq:CompSys}. Indeed, the condition \eqref{eq:MRcond} is by no means unique,   and similar conditions can be stated using, for example, Lyapunov's second method.\qed 
\end{rem}

    It is  also of practical importance to note that if a controller $v=K(s)w$ stabilizing the origin of the comparison system \eqref{eq:CompSys} has been designed, then one does not need to check the conditions of the theorem. That is to say, one  can instead utilize the Andronov--Vitt theorem on the first approximation system $\delta\Dstate=(A(s)+B(s)K(s)\Oms(s))\delta \state$ to validate that it will also be a stabilizing controller for \eqref{eq:TVLsys}; or, equivalently, check that the  system \eqref{eq:TVLcompsys} has $(n-1)$ characteristic multipliers within the unit circle.
    As yet another alternative, one can utilize the following.
    \begin{lem}\label{lemma:AVcompSys}
        If the  system \eqref{eq:CompSys} under the controller $v=K(s)\Oms(s)w$ 
        has  one  simple  zero  characteristic  exponent and the remaining $(n-1)$ characteristic exponents have strictly negative real parts,
        then the controller $u=K(s)\tvc$ asymptotically stabilizes the origin of the system \eqref{eq:LTVD}. \qed
    \end{lem}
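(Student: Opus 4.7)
The plan is to upgrade the hypothesis on the closed-loop comparison system into an Andronov--Vitt-type structure for the closed-loop \emph{first approximation} of the nonlinear system along $x_*(t)$, then invoke the Andronov--Vitt theorem to deduce orbital asymptotic stability of $x_*(t)$, and finally pull this back to asymptotic stability of the origin of \eqref{eq:LTVD} via Proposition~\ref{prop:minTVCandExcTVC}.

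First, I would linearize the closed-loop nonlinear system $\dot{x}=f(x)+g(x)K(p(x))z_\perp(x)$ along $x_*(t)$. Since $z_\perp=x-x_s(p(x))$ vanishes on $\eta_*$ with Jacobian $\Omega(s)$, the closed-loop variational equation reads
\begin{equation*}
\dot{\delta x} = \bigl(A(s)+B(s)K(s)\Omega(s)\bigr)\delta x.
\end{equation*}
I observe that $\dot{x}_*(t)=\rho(s)x_s'(s)$ is automatically a solution of this equation because $\Omega(s)x_s'(s)=0$ annihilates the feedback contribution on the nominal tangent direction, so one characteristic exponent of the closed-loop first approximation is always zero.

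Next, I would tie the remaining $n-1$ exponents to those of the hypothesised closed-loop comparison system. The key tool is Lemma~\ref{lemma:spectrumCond} applied with the \emph{effective} feedback gain $\tilde{K}(s):=K(s)\Omega(s)$: the nonlinear controller $u=\tilde{K}(s)z_\perp$ has the same first approximation as $u=K(s)z_\perp$ (because $\Omega^2=\Omega$), while the closed-loop comparison system under $v=\tilde{K}(s)w$ coincides with $\dot{w}=\Omega(s)(A(s)+B(s)K(s)\Omega(s))w$. Since both closed-loop systems carry the same unavoidable zero exponent in the direction $x_s'(s)$ (cf.~Proposition~\ref{prop:limETL}), the equality of minimal characteristic-exponent sums provided by Lemma~\ref{lemma:spectrumCond} forces the remaining $n-1$ exponents to match. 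The hypothesis therefore implies that the closed-loop first approximation has one simple zero exponent together with $n-1$ exponents of strictly negative real part, and the Andronov--Vitt theorem yields orbital asymptotic stability of $x_*(t)$. Proposition~\ref{prop:minTVCandExcTVC}, applied to the (excessive) coordinates $z_\perp$, then delivers asymptotic stability of the origin of \eqref{eq:LTVD}.

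The main obstacle is converting the ``equality of minimal sums'' delivered by Lemma~\ref{lemma:spectrumCond} into a genuine bijective correspondence of the $n-1$ non-trivial characteristic exponents of the two closed-loop periodic systems. This should follow by leveraging the structural fact that $\Omega(s)$ is a rank-$(n-1)$ projection whose kernel $\mathrm{span}(x_s'(s))$ coincides with the zero-exponent eigendirection in both systems, so that quotienting out this common direction identifies the residual spectra; making this identification rigorous in the Floquet sense for the time-varying periodic matrices is the technical heart of the argument.
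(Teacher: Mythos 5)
The decisive step in your argument --- that the equality of the minimal sums of characteristic exponents supplied by Lemma~\ref{lemma:spectrumCond}, together with the shared zero exponent along $\xs'(s)$, ``forces the remaining $n-1$ exponents to match'' --- is a genuine gap: equality of sums carries no information about the individual exponents. Two periodic systems can have spectra $\{0,-1,-3\}$ and $\{0,-5,+1\}$, with equal sums and a common simple zero exponent, yet only one of them has the Andronov--Vitt structure you need. So Lemma~\ref{lemma:spectrumCond} cannot transfer the hypothesis from the closed-loop comparison system $\dot w=\Oms(s)\bigl(A(s)+B(s)K(s)\Oms(s)\bigr)w$ to the closed-loop first approximation $\delta\dot{x}=\bigl(A(s)+B(s)K(s)\Oms(s)\bigr)\delta x$, and the ``quotienting'' argument that would do the work is exactly what you have deferred; as written, the key step fails. (The final pullback is also slightly off: Proposition~\ref{prop:minTVCandExcTVC} compares two sets of transverse coordinates, it does not by itself convert orbital stability of $x_*(t)$ into stability of the origin of \eqref{eq:LTVD}, though locally $\|\tvc\|$ and the distance to $\eta_*$ are equivalent, so this part is reparable.)

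What closes the gap --- and is the paper's own one-line proof --- is the change of coordinates $w=U(s)\chi$ with $U=[\xs'/\|\xs'\|,\ \Pperp]$ already used in the proof of Proposition~\ref{theorem:mainResult}: since $\Oms(s)U(s)\chi=\Pperp(s)\chi_r$, the feedback $v=K\Oms w$ acts only through $\chi_r$, and by \eqref{eq:chiSys} the closed-loop comparison system is block upper triangular, with a decoupled scalar block for $\chi_1$ (which carries the zero exponent) and the $(n-1)$-dimensional block equal to the closed-loop subsystem \eqref{eq:TVLsubSys} under $u=K(s)\Pperp(s)\xi_\perp$. Simplicity of the zero exponent plus negativity of the other $n-1$ exponents is therefore precisely exponential stability of the closed-loop \eqref{eq:TVLsubSys}; since every solution of the constrained transverse linearization has the form $\delta\tvc=\Pperp(s)\xi_\perp$ (Proposition~\ref{prop:limETL}) and the remainder in \eqref{eq:LTVD} is $O(\|\tvc\|^2)$, asymptotic stability of the origin of \eqref{eq:LTVD} follows directly, with no detour through the Andronov--Vitt theorem. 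Incidentally, the spectral matching you asserted is in fact true, but for a structural reason you would have to prove: in the same $U(s)$-coordinates the closed-loop first approximation is also block triangular with the identical $(n-1)$-block (using $\Oms\Pperp=\Pperp$ and $\Pperp^{\dagger}\Oms=\Pperp^{\dagger}$, and that $f(\xs(s))$ is a closed-loop solution); it is this triangular decomposition, not the trace identity of Lemma~\ref{lemma:spectrumCond}, that identifies the residual spectra.
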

\if\withproofs1    
 Indeed, it is not difficult to see that the above is  implied by \eqref{eq:chiSys} in the proof of Proposition~\ref{theorem:mainResult} to be a sufficient condition for the asymptotic stability of the subsystem \eqref{eq:TVLsubSys}. 
 \fi
 This again shows that one does not need to compute the Hessian of $\prj(\cdot)$ in order to validate the stability of the orbit.  Moreover, this has an additional advantage  compared to the Andronov--Vitt theorem  arising whenever the dynamical system has a periodic  solution only in the presence of  some  non-zero nominal control input $\upsilon(s(t))\equiv u_*(t)$, i.e.  $\frac{d}{dt}\xs(s)=f(\xs(s))+g(\xs(s))\upsilon(s)$.
    As then the matrix $A(\cdot)$ of the first approximation  is  given by
    \begin{equation*}
        A(s)=\left.\left[\frac{\partial f}{\partial x}+g\upsilon'(s)\DPs(s)+\sum_{i=1}^m\frac{\partial g_i}{\partial x}\upsilon_i(s)\right]\right\lvert_{x=x_s(s)},
    \end{equation*}
    one needs to compute $\upsilon'(s)$ in order to utilize the Andronov-Vitt Theorem, whereas it can be  omitted in the transverse linearization, and consequently for the comparison system \eqref{eq:CompSys}, due to the condition $\DPs(s)\delta z \equiv 0$. 
    
    We illustrate the above scheme in a simple example next. 

\section{Illustrative Example}\label{sec:ExampleBanHau}
Consider the system
\begin{subequations}
\begin{align}
    \Dot{x}_1 &=x_2+x_1x_3+x_1u \\
    \Dot{x}_2 &=-x_1+x_2x_3+x_2u \\
    \Dot{x}_3 &= u
\end{align}
\end{subequations}
which for $u\equiv 0$ has a  family of periodic orbits given by
\begin{equation}\label{eq:HauserOrbit}
    \eta_a=\{\state\in\Ri{3}| x_1^2+x_2^2=a^2,x_3=0,a>0\}.
\end{equation}
This system has previously been considered in   \cite{banaszuk1995feedback}, where a (transverse) feedback linearizing approach was utilized in order to find a minimal set of transverse coordinates. More specifically, they showed  that by taking $\theta=-\arctan(x_2/x_1)$, there exists a pair of  transverse coordinates $(\sigma_1,\sigma_2)$, defined as $\sigma_1:=\log \left({\sqrt{x_1^2+x_2^2}}\right)-\log(a) -x_3$ and $\sigma_2:=x_3$,
such that $(x_1,x_2,x_3)\mapsto (\theta,\sigma_1,\sigma_2)$ is a diffeomorphism everywhere except $(x_1,x_1)=(0,0)$. Moreover, the dynamics of $\theta$ is trivial ($\Dot{\theta}=1$) while  the dynamics of the transverse coordinates $(\sigma_1,\sigma_2)$ are linear: $\Dot{\sigma}_1=\sigma_2$, $  \Dot{\sigma}_2=u$.
While this is clearly  a convenient choice of coordinates,  and illustrates the possibility of finding a minimal set of coordinate that can greatly simplify control design, it also shows the challenge of finding a (convenient) set of  coordinates even for such a simple, low dimensional system. 

Let us therefore instead consider $s=\prj(\state)=\text{atan2}\left({x_1},{x_2}\right)$ with $\Dot{s}_*(t)=\nvel(s(t))=1$, which here satisfies the orthogonality condition \eqref{eq:OrthogonalityCondition} ($\text{atan2}(\cdot)$ denotes the four-quadrant arctangent function), and which lets us parameterize the orbit $\eta_a$ by 
$\state_s(s)=[a\sin(s), a\cos(s) , 0]\transp$. The linearized transverse dynamics \eqref{eq:TVLcompsys} then  becomes
\begin{equation}\label{eq:HauserLTD}
    \frac{dy}{ds} =\begin{bmatrix}0 & 1 & a\sin(s) \\ -1 & 0 & a\cos(s) \\ 0 & 0 & 0 \end{bmatrix}y + \begin{bmatrix} a\sin(s)\\ a\cos(s) \\ 1\end{bmatrix}u,
\end{equation}
while its comparison system \eqref{eq:CompSys} is given by
\begin{equation}\label{eq:HauserALTD}
    \frac{dw}{ds} =\begin{bmatrix}-\frac{\sin(2s)}{2} & \sin^2(s) & a\sin(s) \\ -\cos^2(s)
    & \frac{\sin(2s)}{2} & a\cos(s) \\ 0 & 0 & 0 \end{bmatrix}w + \begin{bmatrix} a\sin(s)\\ a\cos(s) \\ 1\end{bmatrix}v.
\end{equation}
 Taking  $a=1$, we designed a stabilizing controller for the comparison system \eqref{eq:CompSys}, in which  the found controller gains can be seen in Figure~\ref{fig:HauserCont}.  These gains correspond to the feedback matrix $K(s)=[k_1(s),k_2(s),k_3]=-B_\perp\transp(s) R(s)$ with $R(s)=R\transp(s)$  the positive definite  solution to the periodic Riccati differential equation 
\begin{align*}
    \frac{dR}{ds}{}+\Oms A\transp R+R\Oms A+I_{3}-RB_\perp B_\perp\transp R= 0. 
\end{align*}
 With this controller, the characteristic exponents of \eqref{eq:HauserLTD} were approximately $(0,   -1.73,   -1)$, implying the asymptotic stability of the orbit by Proposition~\ref{prop:limETL}; while for the system \eqref{eq:HauserALTD} they were approximately
 $(-0.86 \pm 0.5i,-1)$, showing it is indeed an orbitally stabilizing controller as we would expect from Proposition~\ref{theorem:mainResult}.
\begin{figure}
    \centering
    \includegraphics[width=0.8\linewidth]{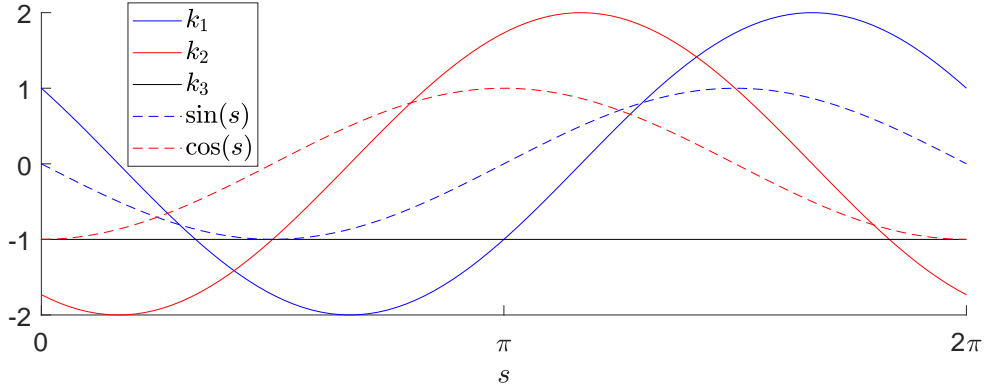}
    \caption{Controller gains stabilizing \eqref{eq:HauserALTD}. }
    \label{fig:HauserCont}
\end{figure}

Let us now also demonstrate a certain limitation of Proposition~\ref{theorem:mainResult} by instead considering
 the feedback
\begin{equation}\label{eq:SimpleCont}
    u(\tvc)=-\begin{bmatrix}\sin(s) & \cos(s) & 1\end{bmatrix}\tvc
\end{equation}
which  stabilizes the system \eqref{eq:HauserLTD}, and consequently asymptotically stabilizes the orbit \eqref{eq:HauserOrbit} for any $a>0$. More specifically,  it can be shown that the modified periodic Riccati differential equation 
\begin{align*}& \label{eq:RDE}
    \Oms\transp\Big[\frac{d{R}_\perp}{ds}+{A}_{\perp}\transp{R}_\perp+{R}_\perp{A}_{\perp}+\I{3}
    -{R}_\perp{B}_{\perp}{B}_{\perp}\transp{R}_\perp\Big]\Oms=0, 
\end{align*}
has a family of solutions given by
\begin{align*}
    R_\perp(s)=\Oms^i(s)\begin{bmatrix}\frac{1}{a}& 0& 0\\
    0 & \frac{1}{a} & 0 \\
    0 & 0 & 1
    \end{bmatrix}\Oms^j(s)
    +k\begin{bmatrix}\cos^2(s)& -\frac{\sin(2s)}{2}& 0\\
    -\frac{\sin(2s)}{2} & \sin^2(s) & 0 \\
    0 & 0 &  0
    \end{bmatrix}
\end{align*}
 for any $k\in\Ri{}$ and $i,j\in\{0,1\}$, such that \eqref{eq:SimpleCont} corresponds to $u(\tvc)=-B_\perp\transp(s) R_\perp(s)\tvc$. Therefore, by taking $V=\delta \tvc\transp R_\perp(s)\delta\tvc$, we have $\dot{V}\le-\|\delta\tvc\|^2$ implying the asymptotic stability of the nominal solution. On other hand, in accordance with Proposition~\ref{prop:limETL}, 
it can be shown that the closed-loop system, i.e. 
$A_{cl}(s):=A_\perp(s)-B_\perp(s)B_\perp\transp(s) R_\perp(s)$, without the orthogonality condition \eqref{eq:OrthogonalityCondition}
has the solution $\xs'(s)$ with characteristic exponent equal to zero. Its two other  independent solutions are $[0,0,e^{-t}]\transp$ and 
$\big[\sin(s(t)),\cos(s(t)),le^{(a-1)t}+\frac{e^{-at}}{a-1}
    \big]\transp$ with $ l\in\Ri{}$. Taking $l=0$, their characteristic exponents equals $-1$ and  $-a$, respectively, again implying the asymptotic stability of the nominal solution.

Consider now the comparison system \eqref{eq:HauserALTD} with the above controller, i.e.
\begin{equation*}
    v(w)=-\begin{bmatrix}\sin(s) & \cos(s) & 1\end{bmatrix}w.    
\end{equation*}
It  too has $\xs'(s)$ as a solution, while it can be shown that  $-1$ and $-a$ are the characteristic exponents of the two remaining independent solutions (although note these solutions are different to those of \eqref{eq:HauserLTD} given above). We can therefore utilize Lemma~\ref{lemma:AVcompSys} to validate that the controller is asymptotically orbitally stabilizing, but we cannot utilize Proposition~\ref{theorem:mainResult} for this purpose.  

So why is not the origin of the comparison system \eqref{eq:HauserALTD} asymptotically stable under the controller \eqref{eq:SimpleCont}?
It turns out that the existence of the solution $\xs'(s)$ is clear simply by  noticing that $B_\perp(s)B_\perp\transp(s) R_\perp(s)\equiv\hat{K}(s)\Oms(s)$ given
\begin{equation*}
    \hat{K}(s):=\begin{bmatrix}
    a & 0 & a\sin(s) \\ 0 & a & a\cos(s) \\
    \sin(s) & \cos(s) & 1 
    \end{bmatrix}.
\end{equation*}
Thus   $u(w)=K(s)w\equiv 0$ for any  $w\in \text{span}(\xs'(s))$. It  follows that a controller  asymptotically stabilizing the linearized transverse dynamics \eqref{eq:TVLsys} will not necessarily asymptotically stabilize the comparison system \eqref{eq:CompSys}.
On the other hand, it is quite interesting to note that   all the characteristic exponents of both the systems $\dot{\hat{y}}=\big(A_\perp(s)-\hat{K}(s)\big)\hat{y}$ and $\dot{\hat{w}}=\big(\Oms(s)A(s)-\hat{K}(s)\big)\hat{w}$ have strictly negative real parts and sum to $(-2a-1)$. 

\section{Concluding remarks }\label{sec:conclusion}
In this paper, we have provided analytical expressions of the linearized transverse dynamics of any valid (minimal or excessive) set of transverse coordinates. In addition, we have defined a generic set of  easy-to-compute  orthogonal coordinates and shown a certain equivalence between their stability and that of any other valid  set. It was further demonstrated that their origin  could be stabilized  by stabilizing a comparison system    of the linearized transverse dynamics. This of course relies on the stabilizability of this comparison system, such that conditions for its stabilizability, as well as  the connection to the stabilizability of the linearized transverse dynamics are topics  of interest and requiring further study. The presented approach nevertheless lays the foundations for further development and generalizations, such as, for example, its extension to hybrid dynamical systems and to  non-periodic motions.


\bibliographystyle{elsarticle-harv}
\bibliography{references.bib}             

\if\withproofs1
\appendix

\section{}
\subsection{Proof of Lemma~\ref{lemma:Omegas}}\label{app:proofPmegas}
   As $s=\prj(\xs(s))$ by definition~\ref{def:ProjOp}, \eqref{eqDpDxs} simply follows from
\begin{equation}\label{eq:nvelDef}
    \frac{d}{ds}\prj(\xs(s))=\frac{d}{ds}s=\DPs(s)\xs'(s)=1
    \end{equation} 
      Using \eqref{eqDpDxs}, it is then straightforward to validate that   $\DPs(s)$ and $\xs'(s)$  are left- and right annihilators of $\Oms(s)$, respectively, and that $\Oms(s)=\Oms^2(s)$. Lastly, as $\Oms$ consists of a rank $n$ matrix (i.e. $\I{n}$) and a rank one matrix (i.e. $\xs'\DPs$), as well as the existence of the annihilators, implying its kernel is of dimension one, it follows that its rank is always $(n-1)$ by the rank-nullity theorem.     
\subsection{Proof of Proposition~\ref{prop:minTVCandExcTVC}}\label{app:proofminTVCandExcTVC}
To prove Proposition~\ref{prop:minTVCandExcTVC} we need only show that $\|\delta \mtvc \|= 0$ if and only if $\|\delta \tvc\| = 0$.
In fact, because of \eqref{eq:dmtvcPiTvc} it is essentially a corollary of the following statement.
\begin{lem}\label{lem:PisdeltaTvc}
 Let $\varphi_\perp(s)\in\ker{\DPs(s)}$, then $\|\Pis(s)\varphi_\perp(s)\|\equiv 0$ if and only if $\|\varphi_\perp(s)\|\equiv 0$.
\end{lem}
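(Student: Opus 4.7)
The plan is to split into the three cases $N = n-1$, $N = n$, and $N > n$ corresponding to the three branches in the definition \eqref{eq:inverseofMinimalJacobian} of $\Pisin$, since the rank condition $\rank \Pis(s) = \min(N,n)$ from Def.~\ref{def:TVC} behaves differently across them. The backward direction ($\varphi_\perp = 0 \Rightarrow \Pis \varphi_\perp = 0$) is immediate, so only the forward direction needs argument.

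For $N \geq n$, the rank condition gives $\rank \Pis(s) = n$, so $\Pis(s)$ has full column rank (and is actually invertible when $N=n$); hence $\ker \Pis(s) = \{0\}$ and the claim is trivial — no appeal to $\varphi_\perp \in \ker \DPs(s)$ is even needed.

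The interesting case is $N = n-1$. The key step is to establish the factorization
\begin{equation*}
    D\mtvc(s,\xs(s)) = \Pis(s)\Oms(s),
\end{equation*}
which follows by combining the chain-rule expression $D\mtvc = \tfrac{\partial \mtvc}{\partial x} + \tfrac{\partial \mtvc}{\partial s}\DP$ with the identity $\tfrac{\partial \mtvc}{\partial s}(s,\xs(s)) = -\Pis(s)\xs'(s)$ from \eqref{eq:dyds} and the definition $\Oms(s) = I_n - \xs'(s)\DPs(s)$ from Lemma~\ref{lemma:Omegas}. Since $\rank D\mtvc(s,\xs(s)) = n-1$ by Def.~\ref{def:TVC}, the kernel of $\Pis(s)\Oms(s)$ is one-dimensional; because $\Oms(s)\xs'(s) = 0$ puts $\xs'(s)$ inside this kernel, rank-nullity forces $\ker(\Pis \Oms)(s) = \operatorname{span}\{\xs'(s)\}$.

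Now for $\varphi_\perp(s) \in \ker \DPs(s)$, the relation $\Oms \varphi_\perp = \varphi_\perp - \xs'(\DPs \varphi_\perp) = \varphi_\perp$ shows that $\Pis(s)\varphi_\perp(s) = 0$ implies $\Pis(s)\Oms(s)\varphi_\perp(s) = 0$, so $\varphi_\perp(s) = c(s)\,\xs'(s)$ for some scalar $c(s)$. Applying $\DPs(s)$ to both sides and using $\DPs(s)\xs'(s) = 1$ from \eqref{eqDpDxs} yields $0 = c(s)$, hence $\varphi_\perp(s) \equiv 0$. The only mildly delicate point is simply spotting the factorization $D\mtvc(s,\xs(s)) = \Pis(s)\Oms(s)$; everything else is linear algebra and direct substitution.
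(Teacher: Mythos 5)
Your proof is correct and follows essentially the same route as the paper's: the trivial full-rank observation for $N\ge n$, the factorization $D\mtvc(s,\xs(s))=\Pis(s)\Oms(s)$ with $\ker(\Pis(s)\Oms(s))=\vspan{\xs'(s)}$, and the identity $\Oms(s)\varphi_\perp(s)=\varphi_\perp(s)$ for $\varphi_\perp(s)\in\ker\DPs(s)$. The only cosmetic difference is that you finish by writing $\varphi_\perp=c\,\xs'$ and killing $c$ with $\DPs(s)\xs'(s)=1$, whereas the paper phrases the same conclusion as $\ker\Pis(s)\cap\im\Oms(s)=\{0\}$.
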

\begin{proof}
 It is trivially true for $N\ge n$ as $\rank{\Pis(s)}=n$ by Def.~\ref{def:TVC}.  For $N=n-1$, we have $\rank{\Pis(s)}=\rank{\Pis(s)\Oms(s)}=n-1$. Because $\ker{\Pis(s)\Oms(s)}=\vspan{\xs'(s)}$, it is  implied that $\ker{\Pis(s)}\cap\im{\Oms(s)}=\{0\} $,  i.e., the kernel of $\Pis(s)$ does not lie the image (range) of $\Oms(s)$. The statement  then follows from the fact that  $\Oms(s)\varphi_\perp(s)=\varphi_\perp(s)$ for any $\varphi_\perp(s)\in\ker{\DPs(s)}$. 
\end{proof}
It is here important to note that this does not imply that $\Pis(s)\xs'(s)=0$, or equivalently $\Pis(s)\Oms(s)=\Pis(s)$, is true in general. For example, consider $\state\in\Ri{2}$ with $\xs(s)=[s,\nvel(s)]\transp$ for some smooth function $\nvel(s)>0$. We can then simply take $\prj(\state)=[1,0]x$ and $\mtvc=[0,1](x-\xs(s))$.  It therefore follows that $\Pis(s)=[0,1]$, and so $\DPs\transp(s)=[1,0]\transp\in\ker{\Pis(s)}$. Although  note that $\Pis(s)\xs'(s)=0$ will of course always be true  whenever $\mtvc$ is defined independently of the parameterization, i.e. 
if $\partial \mtvc/\partial s=0$,  or if $\prj(\cdot)$ is defined such that $\theta(s)$ in \eqref{eq:GenDpsDef} is always zero. 
\subsection{Proof of Theorem~\ref{theorem:TVLGencoords}}\label{app:ProofOfGenTVL}
We begin by making the following claim.
\begin{lem}\label{lemma:Pisin}
Let $\Pisin$ be defined according to \eqref{eq:inverseofMinimalJacobian}. Then $\Pisin$ is of rank $\min(n,N)$ and the relation
\begin{equation}\label{eq:deltaTvcdeltaMtvc}
    \delta \tvc=\Pisin(s)\delta \mtvc
\end{equation}    
is valid   for all $s\in\sspace$ as long as $\delta \mtvc$ satisfies  the    condition 
\begin{equation*}
    \DPs(s)\Pisin(s)\delta \mtvc=0    ,
\end{equation*}
which is necessary for $N\ge n$.
\end{lem}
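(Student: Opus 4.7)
The plan is to prove the two assertions (rank and the relation $\delta \tvc = \Pisin \delta \mtvc$) by first establishing the preparatory identity
\[
\Psi(s) \;=\; \Pis(s)\,\Oms(s),
\]
which follows immediately from the definition of the total derivative and the previously-derived relation \eqref{eq:dyds}: expanding
$\Psi = \frac{\partial \mtvc}{\partial \state} + \frac{\partial \mtvc}{\partial s}\DPs = \Pis - \Pis\xs'\DPs = \Pis(\I{n} - \xs'\DPs)$.
Combined with Lemma~\ref{lemma:Omegas}, this also gives $\ker\Psi(s) = \vspan(\xs'(s))$.

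For the rank claim I would simply invoke the rank hypothesis of Def.~\ref{def:TVC}. When $N > n$, $\Pis\transp\Pis$ is $n\times n$ of full rank $n$, so $\Pisin = [\Pis\transp\Pis]\inv\Pis\transp$ clearly has rank $n$. When $N = n$ the matrix $\Pis$ is invertible. When $N = n-1$, the product $\Pisin = \Oms\Psi\transp[\Psi\Psi\transp]\inv$ has rank at most $n-1 = \rank\Oms$; surjectivity onto $\ker\DPs = \im\Oms$ follows from the argument for the relation below, giving $\rank\Pisin = n-1$. Thus $\rank\Pisin = \min(n,N)$ in all three cases.

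For the identity $\delta\tvc = \Pisin(s)\delta\mtvc$, I would first use \eqref{eq:dmtvcPiTvc}, namely $\delta\mtvc = \Pis(s)\delta\tvc$, and then observe that the transversality constraint $\DPs\Pisin\delta\mtvc = 0$ is equivalent to $\delta\tvc \in \ker\DPs$, i.e.\ $\delta\tvc = \Oms\delta\tvc$. The cases $N=n$ and $N>n$ are then immediate: in the first, $\Pisin = \Pis\inv$ so $\Pisin\Pis\delta\tvc = \delta\tvc$; in the second, $\Pisin$ is the left Moore--Penrose inverse of $\Pis$, so again $\Pisin\Pis = \I{n}$. I would also verify here that for $N\ge n$, $\DPs\Pisin$ is not identically zero and hence the transversality condition is a genuine constraint on $\delta\mtvc$, whereas for $N=n-1$ it is automatically satisfied through the factor $\Oms$, because $\DPs\Oms \equiv 0$ by Lemma~\ref{lemma:Omegas}.

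The main obstacle is the case $N = n-1$. Here, assuming $\delta\tvc = \Oms\delta\tvc$ so that $\Pis\delta\tvc = \Pis\Oms\delta\tvc = \Psi\delta\tvc$, I would need to show
\[
\Oms(s)\,\Psi\transp(s)\bigl[\Psi(s)\Psi\transp(s)\bigr]\inv \Psi(s)\,\delta\tvc \;=\; \delta\tvc .
\]
The argument is that $\Psi^{+}\Psi := \Psi\transp[\Psi\Psi\transp]\inv\Psi$ is the orthogonal projector onto the row space of $\Psi$, so the residual $\delta\tvc - \Psi^{+}\Psi\,\delta\tvc$ lies in $\ker\Psi = \vspan(\xs')$; applying $\Oms$ annihilates this residual by Lemma~\ref{lemma:Omegas}, while $\Oms\delta\tvc = \delta\tvc$, yielding the desired equality. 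This step is where the specific form of $\Pisin$ (with the leading $\Oms$) is essential — without it the identity fails unless $\ker\DPs$ already coincides with $\vspan(\xs')^{\perp}$, i.e.\ unless $\theta(s)\equiv 0$ in \eqref{eq:GenDpsDef}, which is precisely the limitation of the formulation in \cite{mohammadi2018dynamic} noted after Theorem~\ref{theorem:TVLGencoords}.
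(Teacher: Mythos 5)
Your proof is correct. For $N\ge n$ it is essentially the paper's own argument: $\Pisin(s)$ is a left inverse of $\Pis(s)$, so \eqref{eq:deltaTvcdeltaMtvc} follows from \eqref{eq:dmtvcPiTvc}, and since $\rank{\Pisin(s)}=n$ while $\DPs(s)\delta\tvc=0$, the algebraic condition is a genuine, necessary restriction on $\delta\mtvc$. Where you genuinely differ is the minimal case $N=n-1$. There the paper's appendix proof only establishes (i) that the condition is automatically satisfied because $\DPs(s)$ annihilates $\Oms(s)$, and (ii) the rank claim, by writing $\Psi\transp(s)=\Oms\transp(s)\Pis\transp(s)$ and showing $\rank{\Oms(s)\Oms\transp(s)\Pis\transp(s)}=n-1$ because no $w$ solves $\Oms\transp(s)w=\xs'(s)$ (as $\im{\Oms\transp(s)}=\ker{\xs'}\transp(s)$); the relation \eqref{eq:deltaTvcdeltaMtvc} itself is only recovered later, in the proof of Theorem~\ref{theorem:TVLGencoords}, via the identity $\begin{bmatrix}\DPs(s)\\ \Pis(s)\Oms(s)\end{bmatrix}\begin{bmatrix}\xs'(s) & \Pisin(s)\end{bmatrix}=\I{n}$. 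You instead prove the relation directly: with $\Psi=\Pis\Oms$ and $\delta\mtvc=\Psi(s)\delta\tvc$, the orthogonal projector $\Psi\transp[\Psi\Psi\transp]\inv\Psi$ leaves a residual in $\ker\Psi=\vspan{\xs'(s)}=\ker\Oms(s)$, which the leading factor $\Oms(s)$ in \eqref{eq:inverseofMinimalJacobian} annihilates while fixing $\delta\tvc$; the rank statement then drops out as surjectivity of $\Pisin(s)$ onto $\ker\DPs(s)$ combined with $\rank{\Oms(s)}=n-1$ from Lemma~\ref{lemma:Omegas}. Both routes rest on the same annihilator structure, but yours gives a self-contained verification of \eqref{eq:deltaTvcdeltaMtvc} in the minimal case and makes explicit why the leading $\Oms$ is indispensable (echoing the paper's remark on \cite{mohammadi2018dynamic}), whereas the paper's version is shorter at this point because it proves only the rank and the automatic constraint, deferring the differential relation to the theorem's proof.
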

\begin{proof}
For $N\ge n$, the definition of $\Pisin(s)$ follows directly from \eqref{eq:dmtvcPiTvc}. However, as then $\rank{\Pisin(s)}=n$, and by the fact that $\DPs(s)\delta \tvc=0$, we obtain the requirement that  $\delta \mtvc$ then must satisfy the condition $\DPs(s)\Pisin(s)\delta \mtvc=0$.
For $N=n-1$, the necessary condition $\DPs(s)\Pisin(s)\delta \mtvc=0$ cannot be satisfied by the coordinates as they then are independent quantities. As $\DPs(s)$ is a left-annihilator of $\Oms(s)$, this will always be satisfied for  $\Pisin(s)$ taken according to \eqref{eq:inverseofMinimalJacobian}. Moreover, as then $\rank{\Phi}(s)=n-1$ for all $s\in\sspace$, it follows that $\Phi(s)\Phi\transp(s)$ is invertible.   
Therefore, using  $\Oms^2(s)=\Oms(s)$, we only need to show that $\rank{\Oms(s)\Oms\transp(s)\Pis\transp(s)}=n-1$, which by Lemma~\ref{lemma:Omegas} is equivalent to showing that there does not exists a  vector $w\in\Ri{n}$ such that $\Oms\transp(s)w=\xs'(s)$.  But as  $\im{\Oms\transp(s)}=\ker{\xs'}\transp(s)$, no such vector can exist. \qed
\end{proof}
We now note that \eqref{eq:minTVCdynamics} is affine in the control input $u$, from which the term $\Pis(s)B_\perp(s)$ and the definition of the matrix function $B_\perp(s)$ naturally follows. In order to find the remaining terms, we define,  from the remaining  part of the right-hand side of \eqref{eq:minTVCdynamics}, the function $F(s,x):={D\mtvc}(s,x)f(x)$. As we can write  its differential about $\eta_*$ both in terms of  variations in the states $\delta\state$ and in the coordinates $(\delta s,\delta \mtvc)$, we obtain the following relations:
\begin{equation*}
    \delta F_s=\frac{\partial F_s}{\partial \mtvc}\delta \mtvc+\frac{\partial F_s}{\partial s}\delta s=\frac{\partial F_s}{\partial \state}\delta \state,
\end{equation*}
where we have used the subscript notation $F_s:=F(s,\xs(s))$.
But as $F_s=0$, we must also  have $\frac{d}{dt}F_s=0$, and hence
\begin{equation*}
    \frac{d}{dt}F_s=\frac{\partial F_s}{\partial \mtvc}\Dmtvc(s,\xs(s))+\frac{\partial F_s}{\partial s}\nvel(s)=\frac{\partial F_s}{\partial s}\nvel(s)=0.
\end{equation*}
This implies $\partial F_s/\partial s =0$, and
therefore
\begin{equation}\label{eq:MatrixEq}
\frac{\partial F_s}{\partial \mtvc}\delta \mtvc=\frac{\partial F_s}{\partial \state}\delta \state.    
\end{equation}

We now note that by \eqref{eq:dyds} we have
\begin{equation}\label{eq:deltaMtvc}
    \delta \mtvc=\Pis(s)\delta \state-\Pis(s)\xs'(s)\delta s.
\end{equation}
On the other hand, by \eqref{eq:inverseofMinimalJacobian}, it can be shown that
\begin{equation}\label{eq:deltaStatedeltaMtvc}
    \delta \state=\Pisin(s)\delta \mtvc+\xs'(s)\delta s,
\end{equation}
which  follows directly from \eqref{eq:deltaMtvc} and the definition of $\Pisin(s)$ whenever $N\ge n$, while for $N=n-1$ it follows from 
\begin{equation*}
    \begin{bmatrix} \DPs(s) \\ \Pis(s)\Oms(s) \end{bmatrix}\begin{bmatrix} \xs'(s) & \Pisin(s) \end{bmatrix}=\I{n}.
\end{equation*}
Thus inserting for $\delta \state$  from \eqref{eq:deltaStatedeltaMtvc} in \eqref{eq:MatrixEq} and noting that $\frac{d}{dt}F_s=\frac{\partial F_s}{\partial \state}\xs'(s)\nvel(s)=0$, we obtain
\begin{equation} \label{eq:MatEq}
    \frac{\partial F_s}{\partial \mtvc}\delta \mtvc=\frac{\partial F_s}{\partial \state}\Pisin(s)\delta \mtvc.
\end{equation}
Straightforward computations then  show that $\frac{\partial F_s}{\partial \state}=\Pis(s)A_\perp(s)+\Xi_s(s)+\frac{\partial F_s}{\partial s}\DPs(s)$, in which, when inserted into \eqref{eq:MatEq}, the  term $\frac{\partial F_s}{\partial s}\DPs$ can be omitted  due to the    the condition $\DPs(s)\Pisin(s)\delta \mtvc=0$. 
Indeed, the condition $\DPs(s)\Pisin(s)\delta \mtvc=0$  is in itself necessary for $N\ge n$ by Lemma~\ref{lemma:Pisin}  and must therefore be added in order to restrict the solutions of the linear  system to  the transverse plane. Thus the system \eqref{eq:GentvcTVL}  follows.

 \subsection{Proof of Proposition~\ref{prop:limETL}}\label{app:ProofOfLim}
 Consider the matrix function 
 \begin{equation*}
    U(s):=\begin{bmatrix}\frac{\xs'(s)}{\|\xs'(s)\|} & \Pperp(s)\end{bmatrix},     \quad 
    U\inv=\begin{bmatrix}\frac{\xs'}{\|\xs'\|} & \left(\Pperp\pinv\right)\transp
    \end{bmatrix}\transp, 
\end{equation*}
which  allows us to rewrite  $\Oms(s)$  on the form
\begin{equation*}
\Oms(s)=U(s)\Lambda U\inv(s), \quad \Lambda:=\begin{bmatrix}0 & \0{1\times n-1} \\ \0{ n-1 \times 1} & \I{n-1} \end{bmatrix}.
\end{equation*}
It can be shown that $\xs'(s)/\|\xs'(s)\|$ is a solution to the system $\dot{q}=\Oms(s) A(s)q$, while $\frac{d}{dt}\Pperp=(\I{n}-\Pperp\Pperp\transp)A_\perp \Pperp    $, and thus $\dot{U}=\big[\Oms A\frac{\xs'{\xs'}\transp}{\|\xs'\|^2}+(\I{n}-\Pperp\Pperp\transp)A_\perp \Oms]U$. 

Consider now the following change of coordinates: $y=U(s)\xi$. We obtain
\begin{align*}
    \dot{\xi}&=U\inv\left[\Oms AU-\frac{\xs'{\xs'}\transp }{\|\xs'\|^2}A\transp U-\dot{U} \right]\xi +U\inv\Oms Bu,
\end{align*}
which it can be shown reduces to 
\begin{equation}\label{eq:xi_sys}
    \dot{\xi}=\begin{bmatrix}
    -\frac{{\xs'}\transp }{\|\xs'\|}A\frac{\xs' }{\|\xs'\|} & \0{1\times n-1} 
    \\
    \0{n-1\times 1} & \Pperp\transp A\Pperp
    \end{bmatrix}\xi
    +\begin{bmatrix}
   \0{1\times m}
    \\
    \Pperp^\dagger B
    \end{bmatrix}u.
\end{equation}
Hence, taking $\xi=[\xi_\parallel,\xi_\perp\transp]\transp$ with $\xi_\perp:=[\0{n-1\times 1},\I{n-1}]\xi$, the stability of the linearized transverse dynamics \eqref{eq:TVLsys}, and consequently the periodic orbit, is equivalent to the stability of the $(n-1)$-dimensional subsystem  \eqref{eq:TVLsubSys}. In addition, from $\dot{\xi}_\parallel=-\frac{{\xs'}\transp }{\|\xs'\|}A \frac{\xs' }{\|\xs'\|} \xi_\parallel$, it follows that we must have $\xi_\parallel=c/\|f(\xs(s))\|$ for some constant $c\in\Ri{}$. Thus \eqref{eq:NonVanSol} will be part of a non-vanishing solution of the  system \eqref{eq:TVLcompsys} regardless of the control input, while it cannot be part of a solution to the system \eqref{eq:TVLsys} due to the condition \eqref{eq:OrthogonalityCondition}.   

\subsection{Proof of Lemma~\ref{lemma:spectrumCond}}\label{app:SpecCondProof}
    We begin by recalling that for a regular linear system (e.g. constant or periodic) of the form $\dot{\sigma}=C(t)\sigma$,  the sum of its characteristic exponents, denoted by $\Sigma$, is given by the formula  
    \begin{equation*}
    \Sigma=\liminf_{t\to\infty} \frac{1}{t}\int_0^t \text{Tr } C(\tau)d\tau    
    \end{equation*}
        where $\text{Tr } C$ denotes the trace of  $C$ \citep{leonov2007time}.  
    
    Also note that equivalence between the variational system of \eqref{eq:DynSYs} and the system \eqref{eq:TVLcompsys}  for $u\equiv 0$ was  demonstrated in \cite{leonov2006generalization}. Thus consider \eqref{eq:DynSYs} with  $u=K(s)\tvc=K(s)\Oms(s)\tvc$, resulting in the first approximation system
    \begin{equation*}
        \delta \dot{x}=\left(A(s)+B(s)K(s)\Oms(s)\right)\delta x. 
    \end{equation*}
    Using that $\text{Tr } CD=\text{Tr }DC$ for any $C,D\in\Ri{n\times n}$ and $\Oms^2(s)=\Oms(s)$, equivalence with \eqref{eq:TVLcompsys} follows by the same arguments as in \cite{leonov2006generalization}.
    To show equivalence between \eqref{eq:TVLcompsys} and \eqref{eq:CompSys}, we need only show  that 
    \begin{align*}
        &\liminf_{t\to\infty} \frac{1}{t}\int_0^t \text{Tr } \left( -\frac{\xs'(s(\tau)){\xs'}\transp(s(\tau))}{\|\xs'(s(\tau))\|^2}A\transp(s(\tau)) \right )d\tau, 
        \end{align*}
        which  is equal to
        \begin{align*}
                &\liminf_{t\to\infty} \frac{1}{t}\int_0^t  \left( -\frac{{\xs'}\transp(s(\tau))}{\|\xs'(s(\tau))\|}A\transp(s(\tau))\frac{\xs'(s(\tau))}{\|\xs'(s(\tau))\|} \right )d\tau,
    \end{align*}
    vanishes.   But from Sec.~\ref{sec:LimTVLD} we know this again is equivalent to 
    \begin{equation*}
        \liminf_{t\to\infty} \frac{1}{t} \ln({1/\|f(\xs(s))\|}),
    \end{equation*}
    and therefore it vanishes as desired as $f(\xs(s))$ is non-vanishing.

\subsection{Proof of Proposition~\ref{theorem:mainResult}}\label{app:ProofMR}
 Consider a coordinate change similar to the one we utilized in \ref{app:ProofOfLim}, namely $w=U(s)\chi$. This now results in
\begin{equation}\label{eq:chiSys}
    \dot{\chi}=\begin{bmatrix}
    0 &  \frac{{\xs'}\transp }{\|\xs'\|}A\transp \Pperp
    \\
    \0{n-1\times 1} & \Pperp\transp A\Pperp
    \end{bmatrix}\chi
    +\begin{bmatrix}
   \0{1\times m} 
    \\
    \Pperp^\dagger B
    \end{bmatrix}u.
\end{equation}
Hence, by defining $\chi_r:=[\0{n-1\times 1},\I{n-1}]\chi$, we  get
\begin{align*}
    \dot{\chi}_1&= \frac{\xs'(s)\transp }{\|\xs'(s)\|}A\transp(s) \Pperp(s)\chi_r 
    \\
    \dot{\chi}_r&= \Pperp\transp(s) A(s)\Pperp(s)\chi_r +\Pperp^\dagger(s) B(s) u.
\end{align*}
Therefore, unlike \eqref{eq:xi_sys}, the above subsystems are not decoupled, implying that for certain triplets $(A(\cdot),B(\cdot),\xs(\cdot))$, its origin may be asymptotically stabilized. 
Thus, taking $u=K(s) w=K(s)U(s)\chi$, we get $\dot{\chi}=A_{\chi}(s)\chi$ with 
\begin{equation*}
     A_{\chi}(s):=\begin{bmatrix}
    0 &  \frac{{\xs'}\transp }{\|\xs'\|}A\transp \Pperp
    \\[0.2cm]
    \Pperp^\dagger BK\frac{\xs'}{\|\xs'\|} & \left(\Pperp\transp A + \Pperp^\dagger BK\right)\Pperp
    \end{bmatrix}.
\end{equation*}
Also  note that with  $W(t)$ denoting the state transition  matrix of the system
\begin{equation*}
\dot{w}=\Omega(s)\left(A(s)+B(s)K(s)\right)w,    
\end{equation*}
that is $w(t)=W(t)w_0$ and $W(0)=\I{n}$, then $X(t)=U\inv(s) W(t)U(s_0)$ has the same characteristic exponents and is the state transition matrix of $\dot{\chi}=A_{\chi}(s)\chi$.  

Consider now the system \eqref{eq:TVLcompsys} with the feedback $u=K(s)\Oms(s)y$, where $\Oms(s)$ is introduced in order to satisfy the condition $\xs'(s)\transp \delta \tvc\equiv 0 $.  The dynamics of $\xi=U\inv(s) y$ are then
\begin{equation*}
    \dot{\xi}=A_{\xi}(s)\xi=A_{\chi}(s)\xi+\begin{bmatrix}
     -\frac{{\xs'}\transp }{\|\xs'\|}A\frac{\xs' }{\|\xs'\|}  &  -\frac{{\xs'}\transp }{\|\xs'\|}A\transp \Pperp
    \\[0.2cm]
    -\Pperp^\dagger BK\frac{\xs'}{\|\xs'\|} & \0{n-1}
    \end{bmatrix}\xi,
\end{equation*}
such that 
\begin{equation*}
    \xi(t)=X(t)\xi_0+X(t)\int_0^t X\inv(\tau) \tilde{A}(\tau)\xi(\tau)d\tau
\end{equation*}
where $\tilde{A}(\tau):=A_{\xi}(s(\tau))-A_{\chi}(s(\tau))$. The system is still decoupled ($\dot{\xi}_\parallel$ is independent of $\xi_\perp$, and vice versa), meaning it still has the solution $\xi_*(t)=[1/\|f(\xs(s))\|,\0{1\times (n-1)}]\transp$; hence 
\begin{equation*}
        \xi_*(0)=X\inv(t) \xi_*(t)-\int_0^t X\inv(\tau) \tilde{A}(\tau)\xi_*(\tau)d\tau.
\end{equation*}
It follows that   we can take
\begin{equation*}
\xi(0)=c\xi_*(0)+[0,\xi_\perp(0)\transp]\transp, \ \ c:=\xi_1(0)\|f(\xs(s_0))\|,
\end{equation*}
such that any solution can be written as
\begin{align*}
    \xi(t)=c\xi_*(t)&+X(t)\begin{bmatrix}0 \\ \xi_\perp(0) \end{bmatrix}
    +X(t)\int_0^t X\inv(\tau) \tilde{A}(\tau)(\xi(\tau)-c\xi_*(\tau))d\tau.
\end{align*}
Thus the system has one solution corresponding to \eqref{eq:NonVanSol}, i.e. $\xi_*(t)$, whose characteristic exponent equals  zero,  but which  is not a solution to the system \eqref{eq:TVLsys}. 
Furthermore, due to the system being decoupled, we can find $(n-1)$ additional independent solutions of the form 
\begin{equation*}
    \xi(t)=L\xi_\perp(t), \quad L\transp:=\begin{bmatrix}\0{(n-1)\times 1} & \I{n-1}\end{bmatrix},
\end{equation*}
where
\begin{align*}
    \xi_\perp(t)=L\transp X(t)\big[L\xi_\perp^0
    +\int_0^t X\inv(\tau) \tilde{A}(\tau)L\xi_\perp(\tau)d\tau\big].
\end{align*}
Therefore, as any solution of \eqref{eq:TVLsys} is of the form $\delta \tvc(t)=\Pperp(s(t))\xi_\perp(t)$, we need only find conditions ensuring the the asymptotic stability of the above solutions.
Towards this end, utilizing \eqref{eq:STMbounds} and the fact that $\|\tilde{A}L\|=\|{\xs'}\transp A\transp\Pperp\|/\|\xs'\|\le \|A\|\le \alpha $, we obtain
\begin{align*}
    \|\xi_\perp(t)\|\le C\psi(0,t)\|\xi_\perp^0\|
    +C\alpha  \int_0^t \psi(\tau,t)\|\xi_\perp\|d\tau,
\end{align*}
where $\psi(\tau,t):=\exp\left(\int_\tau^t \zeta \ ds\right)$.
Therefore, by defining $\phi(t):=\psi(t,0)\|\xi_\perp(t)\|$,  the above inequality implies
\begin{align*}
    \phi(t)\le& C\phi(0)+C\alpha  \int_0^t \phi(\tau)d\tau.
\end{align*}
This allows us to utilize Grönwall's lemma to obtain the inequality $\phi(t)\le C\phi(0)\exp\left(C\alpha t \right)$,
which  further implies 
\begin{equation*}
    \|\xi_\perp(t)\|\le C\|\xi_\perp^0\|\exp\left(\int_0^t (\zeta(s)+C\alpha)  ds\right).
\end{equation*}
Thus, by the hypothesis of the proposition, the largest characteristic exponent  therefore has a strictly negative real part and hence  the origin of \eqref{eq:LTVD}  is asymptotically (exponentially) stable.  

\fi
\end{document}